\def\BibTeX{{\rm B\kern-.05em{\sc i\kern-.025em b}\kern-.08em
    T\kern-.1667em\lower.7ex\hbox{E}\kern-.125emX}}
\newtheorem{lemma}{Lemma}
\renewcommand{\fnum@figure}{Fig. \thefigure}
\newacronym{Ph.D.}{Ph.D.}{Doctor of Philosophy}
\newacronym{SA}{SA}{Simulated Annealing}
\newacronym{VLC}{VLC}{visible light communications}
\newacronym{RF}{RF}{radio frequency}
\newacronym{V2V}{V2V}{vehicle-to-vehicle}
\newacronym{V2X}{V2X}{vehicle-to-everything}
\newacronym{V2I}{V2I}{vehicle-to-infrastructure}
\newacronym{B5G}{B5G}{beyond-fifth generation}
\newacronym{LED}{LED}{light emitting diode}
\newacronym{OMA}{OMA}{orthogonal multiple access}
\newacronym{FDMA}{FDMA}{frequency-division multiple-access}
\newacronym{TDMA}{TDMA}{time-division multiple-access}
\newacronym{CDMA}{CDMA}{code-division multiple-access}
\newacronym{OFDMA}{OFDMA}{orthogonal frequency-division multiple-access}
\newacronym{OFDM}{OFDM}{orthogonal frequency-division multiplexing}
\newacronym{WDMA}{WDMA}{wavelength-division multiple-access}
\newacronym{NOMA}{NOMA}{non-orthogonal multiple access}
\newacronym{PD-NOMA}{PD-NOMA}{power-domain NOMA}
\newacronym{CD-NOMA}{CD-NOMA}{code-domain NOMA}
\newacronym{SC}{SC}{superposition coding}
\newacronym{SIC}{SIC}{successive interference cancellation}
\newacronym{BS}{BS}{base station}
\newacronym{QoS}{QoS}{quality-of-service}
\newacronym{NP}{NP}{non-deterministic polynomial-time}
\newacronym{DCO-OFDM}{DCO-OFDM}{direct-current biased optical-OFDM}
\newacronym{DCO-OFDMA}{DCO-OFDMA}{direct-current biased optical-OFDMA}
\newacronym{DC}{DC}{direct current}
\newacronym{ITU}{ITU}{international telecommunication union}
\newacronym{FoV}{FoV}{field-of-view}
\newacronym{CSI}{CSI}{channel state information}
\newacronym{LACO-OFDM}{LACO-OFDM}{layered asymmetrically clipped optical OFDM}
\newacronym{ACO-OFDM}{ACO-OFDM}{asymmetrically clipped optical OFDM}
\newacronym{FR}{FR}{frequency reuse}
\newacronym{EAs}{EAs}{evolutionary algorithms}
\newacronym{C-LiAN}{C-LiAN}{centralized light access network}
\newacronym{AP}{AP}{access point}
\newacronym{PD}{PD}{photo-diode}
\newacronym{SINR}{SINR}{signal-to-noise-interference ratio}
\newacronym{LoS}{LoS}{line-of-sight}
\newacronym{AWGN}{AWGN}{additive white Gaussian noise}
\newacronym{SNR}{SNR}{signal-to-noise ratio}
\newacronym{NLUPA}{NLUPA}{next-largest-difference user-pairing algorithm}
\newacronym{D-NLUPA}{D-NLUPA}{divide-and-next-largest-difference user-pairing algorithm}
\newacronym{NAICS}{NAICS}{network-assisted interference cancellation and suppression}
\newacronym{LTE}{LTE}{long term evolution}
\newacronym{3GPP}{3GPP}{3rd Generation Partnership Project}
\newacronym{CR}{CR}{cognitive radio}
\newacronym{2D}{2D}{two-dimension}
\newacronym{GP}{GP}{gradient projection}
\newacronym{umMTC}{umMTC}{ultra-massive machine-type communication}
\newacronym{mMTC}{mMTC}{massive machine-type communication}
\newacronym{IoE}{IoE}{internet-of-everything}
\newacronym{IoUT}{IoUT}{internet-of-underwater-things}
\newacronym{ZF}{ZF}{zero-forcing}
\newacronym{NLIP}{NLIP}{non-linear integer programming}
\newacronym{DP}{DP}{dynamic programming}
\newacronym{MIMO}{MIMO}{multiple-input multiple-output}
\newacronym{TS}{TS}{Tabu-search}
\newacronym{THz}{THz}{Terahertz}
\newacronym{MISO}{MISO}{multiple-input single-output}
\newacronym{SIMO}{SIMO}{single-input multiple-output}
\newacronym{EE}{EE}{energy efficiency}
\newacronym{SEE}{SEE}{sum energy efficiency}
\newacronym{VR}{VR}{virtual reality}
\newacronym{XR}{XR}{extended reality}
\newacronym{5G}{5G}{fifth generation}
\newacronym{6G}{6G}{sixth generation}
\newacronym{NR}{NR}{new radio}
\newacronym{mmWave}{mmWave}{millimeter-wave}
\newacronym{FD}{FD}{full-duplex}
\newacronym{CNOMA}{CNOMA}{cooperative NOMA}
\newacronym{ABF}{ABF}{analog beamforming}
\newacronym{BF}{BF}{beamforming}
\newacronym{DF}{DF}{decode-and-forward}
\newacronym{AF}{AF}{amplify-and-forward}
\newacronym{CF}{CF}{compress-and-forward}
\newacronym{SPS}{SPS}{single-phase shifter}
\newacronym{PS}{PS}{phase shifter}
\newacronym{PA}{PA}{power amplifier}
\newacronym{NLoS}{NLoS}{non-line-of-sight}
\newacronym{ULA}{ULA}{uniform linear array}
\newacronym{SI}{SI}{self-interference}
\newacronym{MA}{MA}{multiple access}
\newacronym{1G}{1G}{first generation}
\newacronym{2G}{2G}{second generation}
\newacronym{3G}{3G}{third generation}
\newacronym{4G}{4G}{fourth generation}
\newacronym{M2M}{M2M}{machine-to-machine}
\newacronym{IoT}{IoT}{internet-of-things}
\newacronym{IMT}{IMT}{International Mobile Telecommunications}
\newacronym{SE}{SE}{spectral efficiency}
\newacronym{MMF}{MMF}{Maximin Fairness}
\newacronym{SR}{SR}{sum rate}
\newacronym{WSR}{WSR}{weighted sum rate}
\newacronym{D-NOMA}{D-NOMA}{dynamic-NOMA}
\newacronym{GSM}{GSM}{global system for mobile telecommunications}
\newacronym{IS-95}{IS-95}{Interim Standard 95}
\newacronym{SMS}{SMS}{short-message service}
\newacronym{CoMP}{CoMP}{coordinated multi-point}
\newacronym{MU-MIMO}{MU-MIMO}{multi-user MIMO}
\newacronym{MAC}{MAC}{multiple-access channel}
\newacronym{BC}{BC}{vector-broadcast channel}
\newacronym{CSIT}{CSIT}{channel state information at the transmitter}
\newacronym{DPC}{DPC}{dirty paper coding}
\newacronym{ZF-DPC}{ZF-DPC}{zero-forcing DPC}
\newacronym{BD}{BD}{block-diagonalization}
\newacronym{ICI}{ICI}{inter-cell interference}
\newacronym{MMSE}{MMSE}{minimum mean square error}
\newacronym{LTE-A}{LTE-A}{long term evolution-advanced}
\newacronym{MUST}{MUST}{multi-user superposition transmission}
\newacronym{SISO}{SISO}{single-input single-output}
\newacronym{LDS-CDMA}{LDS-CDMA}{low-density spreading CDMA}
\newacronym{LDS-OFDM}{LDS-OFDM}{low-density spreading OFDM}
\newacronym{SCMA}{SCMA}{sparse code multiple access}
\newacronym{MUSA}{MUSA}{multi-user sharing access}
\newacronym{SAMA}{SAMA}{successive interference cancellation amenable multiple access}
\newacronym{PDMA}{PDMA}{pattern division multiple access} 
\newacronym{BOMA}{BOMA}{building block sparse-constellation based orthogonal multiple access}  
\newacronym{LPMA}{LPMA}{lattice partition multiple access} \newacronym{OOK}{OOK}{on-off keying} 
\newacronym{M-PAM}{M-PAM}{M-ary pulse-amplitude modulation} 
\newacronym{M-PPM}{M-PPM}{M-ary pulse-position modulation} \newacronym{MSM}{MSM}{multiple-subcarrier modulation}
\newacronym{IM/DD}{IM/DD}{intensity modulation and direct detection}
\newacronym{RC}{RC}{repetition code}
\newacronym{SM}{SM}{spatial multiplexing}
\newacronym{SMOD}{SMOD}{spatial modulation}
\newacronym{BER}{BER}{bit error rate}
\newacronym{SER}{SER}{symbol error rate}
\newacronym{MFTP}{MFTP}{maximum flickering time period}
\newacronym{MU-MISO}{MU-MISO}{multi-user multi-input single-output}
\newacronym{MSE}{MSE}{minimum square error}
\newacronym{SPCA}{SPCA}{sequential parametric convex approximation}
\newacronym{WSMSE}{WSMSE}{weighted sum minimum square error}
\newacronym{OCDMA}{OCDMA}{optical code division multiple access}
\newacronym{SDMA}{SDMA}{space-division multiple access}
\newacronym{DMT}{DMT}{discrete multi-tone}
\newacronym{VLNs}{VLNs}{visible light networks}
\newacronym{VHO}{VHO}{Vertical handover}
\newacronym{RSS}{RSS}{received signal strength}
\newacronym{RSI}{RSI}{received signal intensity}
\newacronym{IA}{IA}{interference alignment}
\newacronym{BIA}{BIA}{blind interference alignment}
\newacronym{BBU}{BBU}{base-band unit}
\newacronym{SU}{SU}{secondary user}
\newacronym{PU}{PU}{primary user}
\newacronym{mMIMO}{mMIMO}{massive-MIMO}
\newacronym{UAV}{UAV}{unmanned aerial vehicle}
\newacronym{PHY}{PHY}{physical}
\newacronym{CF-mMIMO}{CF-mMIMO}{cell-free mMIMO}
\newacronym{LIS}{LIS}{large intelligent surfaces}
\newacronym{3-D MIMO}{3-D MIMO}{3-Dimensional MIMO}
\newacronym{RIS}{RIS}{reflecting intelligent surface}
\newacronym{BackCom}{BackCom}{backscatter communications}
\newacronym{UL}{UL}{uplink}
\newacronym{UE}{UE}{user equipment}
\newacronym{D2D}{D2D}{device-to-device}
\newacronym{FCC}{FCC}{Federal Communications Commission}
\newacronym{HAP}{HAP}{high altitude platform}
\newacronym{LAP}{LAP}{low altitude platform}
\newacronym{MEC}{MEC}{mobile edge computing}
\newacronym{NATO}{NATO}{North Atlantic Treaty Organization}
\newacronym{ML}{ML}{machine learning}
\newacronym{QML}{QML}{quantum machine learning}
\newacronym{DL}{DL}{deep learning}
\newacronym{DRL}{DRL}{deep Reinforcement learning}
\newacronym{RL}{RL}{Reinforcement learning}
\newacronym{MMA}{MMA}{minorization maximization algorithm}
\newacronym{MM}{MM}{majorization-minimization}
\newacronym{KKT}{KKT}{Karush–Kuhn–Tucker}
\newacronym{FDD}{FDD}{frequency division duplex}
\newacronym{SCA}{SCA}{sine-cosine algorithm}
\newacronym{AoD}{AoD}{angle-of-departure}
\newacronym{SDP}{SDP}{semi-definite programming}
\newacronym{SDR}{SDR}{semi-definite relaxation}
\newacronym{JT}{JT}{joint transmission}
\newacronym{CB}{CB}{coordinated beamforming}
\newacronym{RAMA}{RAMA}{relay-aided multiple access}
\newacronym{MRC}{MRC}{maximum ratio combining}
\newacronym{SWIPT}{SWIPT}{simultaneous wireless information and power transfer}
\newacronym{HetNets}{HetNets}{heterogeneous networks}
\newacronym{D.C.}{D.C.}{difference of convex}
\newacronym{GRPA}{GRPA}{gain ratio power allocation}
\newacronym{FPA}{FPA}{fixed power allocation}
\newacronym{CS}{CS}{Cuckoo Search}
\newacronym{HHO}{HHO}{Harris Hawks Optimizer}
\newacronym{PLC}{PLC}{power line communications}
\newacronym{HTT}{HTT}{harvest-then-transmit}
\newacronym{H-CRAN}{H-CRAN}{heterogeneous cloud radio access network}
\newacronym{RRHs}{RRHs}{remote radio heads}
\newacronym{IIoT}{IIoT}{Industrial IoT}
\newacronym{PAPR}{PAPR}{peak-to-average-power-ratio}
\newacronym{ANC}{ANC}{analog network coding}
\newacronym{FFR}{FFR}{fractional frequency reuse}
\newacronym{RGB}{RGB}{red-green-blue}
\newacronym{MAR}{MAR}{mobile augmented reality}
\newacronym{HD}{HD}{half-duplex}
\newacronym{CPU}{CPU}{central process unit}
\newacronym{RWP}{RWP}{Random Way-Point}
\newacronym{LC}{LC}{liquid crystal}
\newacronym{ADR}{ADR}{angle diversity receiver}
\newacronym{RSMA}{RSMA}{rate splitting multiple access}
\newacronym{omni-DRIS}{omni-DRIS}{omni-digital-RIS}
\newacronym{STAR-RIS}{STAR-RIS}{simultaneous transmission and reflection reconfigurable intelligent surface}
\newacronym{OSTAR-RIS}{OSTAR-RIS}{optical simultaneous transmission and reflection reconfigurable intelligent surface}
\begin{document}

\title{Computation Rate Maximization for Wireless Powered Edge Computing With Multi-User Cooperation}

\author{Yang Li, Xing Zhang,~\IEEEmembership{Senior Member,~IEEE,} Bo Lei, Qianying Zhao, Min Wei, Zheyan Qu, and Wenbo Wang,~\IEEEmembership{Senior Member,~IEEE}%

\thanks{Manuscript received 2 Jan 2024; accepted 20 Jan 2024. This work is supported by the National Science Foundation
of China under Grant 62071063, 62271062. (Corresponding author: Xing Zhang.)}
\thanks{Y. Li, X. Zhang, Z. Qu, and W. Wang are with the School of Information and Communications Engineering, Beijing University of Posts and Telecommunications, Beijing 100876, China (e-mail: ly209991@bupt.edu.cn; zhangx@ieee.org; zheyanqu@bupt.edu.cn; wbwang@bupt.edu.cn).}
\thanks{B. Lei, Q. Zhao, M. Wei are with Beijing Branch of China Telecom Co., Ltd., Beijing 100032, China (e-mail: leibo@chinatelecom.cn; zhaoqy50@chinatelecom.cn; weim6@chinatelecom.cn).}
\thanks{$\copyright$ 2024 IEEE. Personal use of this material is permitted. Permission from IEEE must be obtained for all other uses, in any current or future media, including reprinting/republishing this material for advertising or promotional purposes, creating new collective works, for resale or redistribution to servers or lists, or reuse of any copyrighted component of this work in other works.}%
\thanks{Digital Object Identifier:\href{https://ieeexplore.ieee.org/abstract/document/12345}{10.1109/OJCOMS.2024.123456}}
}

% The paper headers
\markboth{Accepted for publication in IEEE Open Journal of the Communications Society, Jan. 2024.}%
{Yang Li \MakeLowercase{\textit{et al.}}: Computation Rate Maximization for Wireless Powered Edge Computing With Multi-User Cooperation}

% Make the title area
\maketitle

\begin{abstract}
The combination of mobile edge computing (MEC) and radio frequency-based wireless power transfer (WPT) presents a promising technique for providing sustainable energy supply and computing services at the network edge. This study considers a wireless-powered mobile edge computing system that includes a hybrid access point (HAP) equipped with a computing unit and multiple Internet of Things (IoT) devices. In particular, we propose a novel muti-user cooperation scheme to improve computation performance, where collaborative clusters are dynamically formed. Each collaborative cluster comprises a source device (SD) and an auxiliary device (AD), where the SD can partition the computation task into various segments for local processing, offloading to the HAP, and remote execution by the AD with the assistance of the HAP. Specifically, we aims to maximize the weighted sum computation rate (WSCR) of all the IoT devices in the network. This involves jointly optimizing collaboration, time and data allocation among multiple IoT devices and the HAP, while considering the energy causality property and the minimum data processing requirement of each device. Initially, an optimization algorithm based on the interior-point method is designed for time and data allocation. Subsequently, a priority-based iterative algorithm is developed to search for a near-optimal solution to the multi-user collaboration scheme. Finally, a deep learning-based approach is devised to further accelerate the algorithm’s operation, building upon the initial two algorithms. Simulation results show that the performance of the proposed algorithms is comparable to that of the exhaustive search method, and the deep learning-based algorithm significantly reduces the execution time of the algorithm.     
\end{abstract}

\begin{IEEEkeywords} Mobile edge computing, multi-user collaboration, wireless power transfer, time and data allocation, deep learning.
\end{IEEEkeywords}

% Make the title area
\maketitle

\section{Introduction}\label{I}
\subsection{Background}\label{I-A}
\IEEEPARstart{T}{he} Industrial Internet of Things (IIoT) is an application of Internet of Things (IoT) in the industrial sector \cite{1}. It represents a new stage for the development of intelligence in traditional industries. IoT nodes with heterogeneous sensors are deployed in an IIoT network to perform various monitoring and control functions. The data collected by IoT nodes can be analyzed to enhance production effectiveness and reliability \cite{2}. Generally, data is collected through IoT nodes' access points and transmitted to a cloud computing center for processing and analysis. However, cloud computing introduces long transmission delays, rendering it unsuitable for real-time IIoT scenarios. Moreover, IIoT systems place a strong emphasis on the protection of private data, especially when dealing with confidential products that may involve sensitive information \cite{3}.

To enhance data processing and privacy protection in IIoT systems, empowering access points with mobile edge computing (MEC) capability is under consideration \cite{4,5}. MEC enables low-latency operations by extending cloud computing and services to the network edge \cite{42,43,44,45,46}. It also enhances the ability to protect private data, as data processing occurs exclusively within the plant's network. However, the computing capabilities of the access point and IoT nodes are limited. Additionally, data collected by different IoT nodes has varying levels of importance. Therefore, we aim to analyze as much high-importance data as possible while ensuring that each IoT node meets the minimum data processing requirement. This goal enables personalized and targeted data analysis in the IIoT system.
 
In the context of MEC, the literature extensively investigates two popular offloading policies \cite{6,27,13,29,30}: partial offloading and binary offloading. The former is applicable to arbitrarily divisible computing tasks. In this case, each task is divided into two parts: one is offloaded to the MEC server, and the other is computed locally by the IoT device. The latter is suitable for indivisible computing tasks, such as simple atomic tasks. Therefore, each task is computed either by the IoT node or the MEC server. However, in the scenario we consider, these two strategies fail to harness the IIoT system's computational capacity fully. This study explores collaboration among multiple IoT nodes and the hybrid access point (HAP) to maximize the weighted sum computation rate (WSCR) of the IIoT system.

On the other hand, due to the stringent device size constraint and production cost consideration, an IoT device often carries a capacity-limited battery, which needs frequent battery replacement and charging. However, this requirement can result in significant labor costs and financial burdens for service providers, particularly in large-scale IIoT systems and harsh environments \cite{8,9}. Therefore, radio frequency (RF)-based wireless power transfer (WPT) has emerged as an effective solution to the problem of finite battery capacity \cite{10}. Specifically, WPT employs a dedicated RF energy transmitter to charge the batteries of remote energy-harvesting devices continuously \cite{11}. According to reports, commercial WPT transmitters can continuously supply RF power of approximately 5W over distances exceeding 10 meters \cite{12}. Deploying the information transceiver, energy transmitter, and computing unit on the HAP to provide both MEC services and WPT can yield their benefits simultaneously. Therefore, wireless-powered mobile edge computing has garnered significant interest as a viable solution to address computing capacity constraints and energy consumption issues in IIoT systems.
\subsection{Related Work}\label{I-B}
% Please add the following required packages to your document preamble:
% \usepackage{graphicx}
% \usepackage[normalem]{ulem}
% \useunder{\uline}{\ul}{}
\begin{table*}[]
\centering
\caption{Overview of the literature related to wireless-powered MEC networks.
}
\label{tab1}
\Large
\resizebox{\textwidth}{!}{%
\begin{tabular}{llllll}
\hline
\textbf{Research}                                                               & \textbf{Number of users} & \textbf{\begin{tabular}[c]{@{}l@{}}Collaboration mode\\ among users\end{tabular}}                                                                                               & \textbf{Objective}                                                                                         & \textbf{Technical method}                                                                                                          & \textbf{Contributions}                                                                                                                                                                      \\ \hline
\begin{tabular}[c]{@{}l@{}}F. Wang et al. \\ (2020) \cite{15}\end{tabular}      & Single user              & No collaboration                                                                                                                                                                & \begin{tabular}[c]{@{}l@{}}Minimize the total \\ transmitted energy \\ consumption of the ET.\end{tabular} & Heuristic algorithm                                                                                                                & \begin{tabular}[c]{@{}l@{}}The energy and task allocation are \\ jointly optimized.\end{tabular}                                                                                            \\ \hline
\begin{tabular}[c]{@{}l@{}}C. Psomas et al. \\ (2020) \cite{16}\end{tabular}    & Single user              & No collaboration                                                                                                                                                                & \begin{tabular}[c]{@{}l@{}}Maximize the system's \\ utility.\end{tabular}                                  & \begin{tabular}[c]{@{}l@{}}Cumulative distribution \\ functions (CDFs) and \\ probability density \\ functions (PDFs)\end{tabular} & \begin{tabular}[c]{@{}l@{}}Analytical expressions for the \\ probability of successful computa-\\ tion and the average number of \\ successfullycomputed bits are \\ provided.\end{tabular} \\ \hline
\begin{tabular}[c]{@{}l@{}}H. Li et al. \\ (2023) \cite{17}\end{tabular}        & Multiple users           & No collaboration                                                                                                                                                                & \begin{tabular}[c]{@{}l@{}}Minimize the transmit \\ power of HAP.\end{tabular}                             & \begin{tabular}[c]{@{}l@{}}Alternating direction \\ method of multipliers \\ (ADMM)\end{tabular}                                   & \begin{tabular}[c]{@{}l@{}}The time allocation and computa-\\ tional mode selection are jointly\\ optimized.\end{tabular}                                                                   \\ \hline
\begin{tabular}[c]{@{}l@{}}S. Bi et al.\\ (2018) \cite{31}\end{tabular}         & Multiple users           & No collaboration                                                                                                                                                                & \begin{tabular}[c]{@{}l@{}}Maximize the weighted\\ sum computation rate.\end{tabular}                      & ADMM                                                                                                                               & \begin{tabular}[c]{@{}l@{}}The individual computing mode sele-\\ ction (binary offloading) and time \\ allocation are jointly optimized.\end{tabular}                                       \\ \hline
\begin{tabular}[c]{@{}l@{}}J. Liu et al. \\ (2020) \cite{18}\end{tabular}       & Multiple users           & No collaboration                                                                                                                                                                & \begin{tabular}[c]{@{}l@{}}Maximize the minimum \\ energy balance among \\ all users.\end{tabular}         & \begin{tabular}[c]{@{}l@{}}Generalized Benders \\ decomposition (GBD)\end{tabular}                                                 & \begin{tabular}[c]{@{}l@{}}Time assignments, CPU frequencies,\\ and the computing mode selection \\ are jointly optimized.\end{tabular}                                                     \\ \hline
\begin{tabular}[c]{@{}l@{}}K. Xiong et al. \\ (2022) \cite{19}\end{tabular}     & Multiple users           & No collaboration                                                                                                                                                                & \begin{tabular}[c]{@{}l@{}}Minimize the energy \\ consumption of the \\ UAV.\end{tabular}                  & \begin{tabular}[c]{@{}l@{}}Successive convex \\ approximation (SCA)\end{tabular}                                                   & \begin{tabular}[c]{@{}l@{}}The UAV trajectory, the offloading \\ amount, and the computational res-\\ ource allocation are jointly opt-\\ imized.\end{tabular}                              \\ \hline
\begin{tabular}[c]{@{}l@{}}P. X. Nguyen et al. \\ (2021) \cite{20}\end{tabular} & Multiple users           & No collaboration                                                                                                                                                                & \begin{tabular}[c]{@{}l@{}}Maximize the weighted \\ sum computation rate.\end{tabular}                     & \begin{tabular}[c]{@{}l@{}}Block coordinate \\ descent\end{tabular}                                                                & \begin{tabular}[c]{@{}l@{}}The transmit power, backscatter \\ coefficient, time-splitting ratio, \\ and binary decision-making matri-\\ ces are jointly optimized.\end{tabular}             \\ \hline
\begin{tabular}[c]{@{}l@{}}F. Zhou et al. \\ (2020) \cite{21}\end{tabular}      & Multiple users           & No collaboration                                                                                                                                                                & \begin{tabular}[c]{@{}l@{}}Maximize the computa-\\ tional efficiency.\end{tabular}                         & SCA                                                                                                                                & \begin{tabular}[c]{@{}l@{}}The time allocation, the local \\ computing frequency, and power \\ are jointly optimized.\end{tabular}                                                          \\ \hline
\begin{tabular}[c]{@{}l@{}}B. Su et al. \\ (2021) \cite{22}\end{tabular}        & Two users                & \begin{tabular}[c]{@{}l@{}}The near user assists\\ the far user.\end{tabular}                                                                                                   & \begin{tabular}[c]{@{}l@{}}Maximize the computa-\\ tional efficiency.\end{tabular}                         & \begin{tabular}[c]{@{}l@{}}Semidefinite relaxation \\ (SDR) and SCA\end{tabular}                                                   & \begin{tabular}[c]{@{}l@{}}The energy beamforming, time, \\ and power allocation are jointly\\ designed.\end{tabular}                                                                       \\ \hline
\begin{tabular}[c]{@{}l@{}}B. Li et al. \\ (2021) \cite{23}\end{tabular}        & Two users                & \begin{tabular}[c]{@{}l@{}}The near user assists\\ the far user.\end{tabular}                                                                                                   & \begin{tabular}[c]{@{}l@{}}Minimize the energy \\ consumption.\end{tabular}                                & Lagrange duality method                                                                                                            & \begin{tabular}[c]{@{}l@{}}The offloading time, task all-\\ ocation and computation frequency\\ are jointly optimized.\end{tabular}                                                         \\ \hline
\begin{tabular}[c]{@{}l@{}}X. Wu et al. \\ (2023) \cite{24}\end{tabular}        & Multiple users           & \begin{tabular}[c]{@{}l@{}}A helper assists a\\ source node (Coll-\\ aboration relation-\\ ships among users \\ are fixed).\end{tabular}                                        & \begin{tabular}[c]{@{}l@{}}Maximize the weighted\\ sum computation rate.\end{tabular}                      & Penalty function method                                                                                                            & \begin{tabular}[c]{@{}l@{}}The time assignment, computation-\\ task allocation, and transmission\\ power are jointly optimized.\end{tabular}                                                \\ \hline
Our study                                                                       & Multiple users           & \begin{tabular}[c]{@{}l@{}}An auxiliary device\\ assists a source node\\ (Collaborative rela-\\ tionships dynamically \\ change across differ-\\ ent time frames).\end{tabular} & \begin{tabular}[c]{@{}l@{}}Maximize the weighted\\ sum computation rate.\end{tabular}                      & \begin{tabular}[c]{@{}l@{}}Interior-point method, \\ priority-based iterative \\ algorithm and deep \\ learning\end{tabular}       & \begin{tabular}[c]{@{}l@{}}The collaboration relationships,\\ time and data allocation among \\ multiple IoT devices and the HAP\\ are jointly optimized.\end{tabular}                      \\ \hline
\end{tabular}%
}
\end{table*}

In the existing literature \cite{13,14,15,16,17,18,19,20,21,22,31,23,24}, MEC and WPT have been extensively investigated in IoT systems. In \cite{13}, an MEC network was studied to minimize the execution delay of total tasks. The study proposed an optimized offloading framework that allows multiple IoT devices to offload tasks to multiple edge computing servers. In \cite{14}, the authors focused on minimizing the overall energy consumption of IoT devices. The research investigated resource allocation techniques for multi-user MEC systems based on time division multiple access (TDMA) and orthogonal frequency division multiple access (OFDMA). However, the WPT was not involved in \cite{13} and \cite{14}.

The optimal design of wireless-powered MEC networks differs significantly from traditional non-WPT-involved MEC networks due to the trade-off between WPT and wireless information transfer (WIT) \cite{17}. To effectively leverage the benefits of MEC and RF-based WPT, several recent studies have delved into the investigation of wireless-powered MEC networks. In \cite{15}, a single-user wireless-powered MEC system was studied to minimize the total transmitted energy consumption of the energy transmitter (ET). The study developed heuristic online designs for joint energy and task allocation. In \cite{16}, the authors presented analytical expressions for the probability of successful computation and the average number of successfully computed bits in a single-user wireless-powered MEC system, aiming to maximize the system's utility. However, \cite{15} and \cite{16} only considered the single-user network model and the derived results may not be readily applicable to multi-user scenarios.

In practice, IIoT systems commonly consist of a large number of IoT devices. As a result, several studies have explored multi-user wireless-powered MEC networks. However, most of these works primarily focus on the trade-off between computing offloading and wireless charging, neglecting the aspect of collaboration among users. The study in \cite{31} investigated a multi-user MEC network powered by WPT. In this network, each energy-harvesting node adheres to a binary computation offloading policy. A joint optimization method, utilizing the alternating direction method of multipliers (ADMM) decomposition technique, was proposed. This method aims to maximize the weighted sum computation rate of all nodes in the network by optimizing both individual computing mode selection and system transmission time allocation jointly. Work in \cite{17} studied a wireless-powered multi-user MEC network in which the transmit power of HAP was minimized. The authors devised an algorithm based on the ADMM to optimize the time allocation and computational mode selection jointly. In \cite{18}, a wireless-powered multi-user hierarchical fog cloud computing network was studied to maximize the minimum energy balance among all users. To achieve the global optimal solution, the authors proposed a generalized Benders decomposition (GBD)-based approach. In \cite{19}, a rotary-wing unmanned aerial vehicle (UAV)-assisted multi-user wireless-powered MEC network was studied, aiming to minimize the energy consumption of the UAV. The authors devised an iterative method based on successive convex approximation (SCA) theory to solve the non-convex problem, jointly optimizing the UAV trajectory, the offloading amount, and the computational resource allocation. In \cite{20}, an OFDMA-based multi-user wireless-powered MEC system was studied to maximize the weighted sum computation rate. The study proposed an efficient algorithm based on block coordinate descent to achieve near-optimal performance. In \cite{21}, a wireless-powered MEC network with multiple users was studied to maximize computational efficiency. The research considered and evaluated the offloading based on both TDMA and non-orthogonal multiple access (NOMA) and proposed two iterative algorithms and two alternative optimization algorithms, respectively.

The diversity among different IoT devices has led to considering user collaboration (UC) as an effective approach to enhance the system's computational capacity \cite{24}. Consequently, numerous recent studies have explored the design of user collaboration in multi-user wireless-powered MEC networks. In \cite{22}, the authors investigated the application of UC and NOMA schemes in a two-user wireless-powered MEC network. They aimed to maximize computational efficiency by jointly designing energy beamforming, time, and power allocation. In \cite{23}, the authors proposed a NOMA-assisted UC scheme in a three-node WPT-MEC system. The system comprises a far user (FU), a near user (NU), and an access point (AP). The FU employs NOMA, enabling the simultaneous offloading of data to the NU and the AP, thereby overcoming the dual far-near effect of the FU. However, in \cite{22} and \cite{23}, only two-user network models were considered, precisely the NU-assisted FU scenario. The derived results may face challenges when extending to multi-user scenarios. In \cite{24}, a multi-user WPT-MEC system was studied. The authors proposed a multi-user collaboration scheme based on OFDMA to enhance computational efficiency. The scheme allows users to divide computational tasks into multiple parts for local computation, offloading to corresponding assistants, and the HAP for remote execution. However, the collaborative relationships among users are fixed in this work. In fact, the optimal collaborative relationships among users may change as the state of the network changes.

Table \ref{tab1} summarizes the closely related works in wireless-powered MEC networks. A comparison between these works and ours primarily focuses on the number of users, collaboration mode, objective, technical method, and contributions.

\subsection{Contributions}\label{I-C}
The main contributions of this paper are summarized as follows.
\begin{itemize}
\item \emph{Firstly,} to maximize the WSCR of the system, we formulate a mixed-integer programming (MIP) problem for the multi-user wireless-powered MEC network. The problem jointly optimizes the collaboration strategy, time allocation, and data assignment among users and the HAP, subject to energy causality constraints and minimum processing data volume requirements.
\item \emph{Secondly,} we decompose the non-convex optimization problem into the time and data allocation sub-problem and the collaboration strategy design sub-problem. An interior-point method based algorithm is developed to solve the time and data allocation sub-problem given a fixed collaboration strategy.
\item \emph{Thirdly,} we propose a low-complexity priority-based iterative search algorithm to determine the multi-user collaboration strategy. This allows for finding a sub-optimal solution within a reduced search space.
\item \emph{Finally,} to further reduce algorithm execution time, we devise a deep learning-based algorithm building on the first two algorithms. Simulation results are performed to demonstrate the validity of the proposed algorithms.
\end{itemize}

It is worth noting that smart energy harvesting and edge computing are crucial technologies and challenges in the context of 6G \cite{40}. Meanwhile, in the realm of 6G, the IIoT stands out as a typical application scenario within IoT. Therefore, the investigation of wireless-powered mobile edge computing within the context of IIoT holds significant importance.

\subsection{Organization}\label{I-D}
The organization of our work is demonstrated as follows. The system model is introduced in Section II, and the weighted sum computation rate maximization problem is formulated in Section III. In Section IV, three algorithms are proposed to solve the formulated problem. Numerical simulation results are provided to validate the proposed algorithms in Section V. Finally, this paper is concluded in Section VI. The code and dataset are available at \href{https://github.com/CPNGroup/WPTMEC}{https://github.com/CPNGroup/WPTMEC}.

\section{System Model}

% Please add the following required packages to your document preamble:
% \usepackage{graphicx}
\begin{table}[t]
\centering
\caption{Summary of system notations}
\label{notation}
\resizebox{0.9\columnwidth}{!}{%
\begin{tabular}{ll}
\hline
Notation           & Description                                                                                                                                    \\ \hline
$\mathcal{N}$      & The set of IoT devices                                                                                                                    \\
$\boldsymbol{o}$   & The set of SDs                                                                                                                                 \\
$m$                & The number of collaborative clusters                                                                                                           \\
$\boldsymbol{p}$   & The set of ADs                                                                                                                                 \\
$\boldsymbol{q}$   & The set of IDs                                                                                                                                 \\
$T$                & The length of a time frame                                                                                                                     \\
$\alpha_1$         & \begin{tabular}[c]{@{}l@{}}The fraction of time that the HAP broadcasts RF \\ energy for the IoT devices to harvest\end{tabular}             \\
$\alpha_2$         & \begin{tabular}[c]{@{}l@{}}The fraction of time that the HAP receives data \\ offloaded by the IoT devices\end{tabular} \\
$\eta$             & The energy harvesting efficiency                                                                                                 \\
$p_n$              & The transmit power of the $n$th IoT device                                                                                                     \\
$P$                & The transmit power of the HAP                                                                                                                  \\
$h_n$              & \begin{tabular}[c]{@{}l@{}}The wireless channel gain between the HAP and the \\ $n$th device\end{tabular}                                      \\
$E_n^h$            & The amount of energy that the $n$th device harvests                                                                                            \\
$t_n^{loc}$        & The local computation time of the $n$th device                                                                                                 \\
$l_n^{loc}$        & \begin{tabular}[c]{@{}l@{}}The amount of data allocated for local processing \\ by the $n$th IoT device\end{tabular}                           \\
$\phi_n$             & \begin{tabular}[c]{@{}l@{}}The number of cycles needed to process one bit of \\ task data from the $n$th IoT device\end{tabular}                                         \\
$f_n$              & \begin{tabular}[c]{@{}l@{}}The processor’s computing speed (cycles per second) \\ of the $n$th IoT device\end{tabular}                         \\
$k_n$              & The computation energy efficiency coefficient                                                                                                  \\
$E_n^{loc}$        & \begin{tabular}[c]{@{}l@{}}The energy consumption of the $n$th device due to \\ the local computing\end{tabular}                               \\
$l_n^{ap}$         & \begin{tabular}[c]{@{}l@{}}The amount of data allocated for processing on the \\ HAP by the $n$th IoT device\end{tabular}                      \\
$l_{o_i}^{p_i}$    & \begin{tabular}[c]{@{}l@{}}The amount of data allocated for processing on the \\ AD $p_i$ by the SD $o_i$\end{tabular}                         \\
$t_n^{off}$        & \begin{tabular}[c]{@{}l@{}}The data transfer time for the $n$th IoT device \\ during computing offloading\end{tabular}                         \\
$E_n^{off}$        & \begin{tabular}[c]{@{}l@{}}The energy consumption of the $n$th device due to \\ computing offloading\end{tabular}                              \\
$l_{th}$           & \begin{tabular}[c]{@{}l@{}}The minimum data processing requirement of each \\ device within a time frame\end{tabular}                                 \\
$f_{ap}$           & \begin{tabular}[c]{@{}l@{}}The processor’s computing speed (cycles per \\ second) of the HAP\end{tabular}                                      \\
$\boldsymbol{l_o}$ & The profile of data distribution scheme for SDs                                                                                                \\
$\boldsymbol{l_p}$ & The profile of data distribution scheme for ADs                                                                                                \\
$\boldsymbol{l_q}$ & The profile of data distribution scheme for IDs                                                                                                \\ \hline
\end{tabular}%
}
\vspace{-0.5cm}
\end{table}

\subsection{Network Model}

In this paper, we consider a multi-user wireless-powered MEC network. The network consists of a HAP integrated with a computing unit and a set of single-antenna devices, denoted as $\mathcal{N}=\{1,..., N\}$. The HAP with $N$ antennas provides stable wireless energy supply and enables task offloading for the $N$ devices, where each device can convert the received RF signals into power for subsequent data offloading and processing. The system time is segmented into consecutive time frames, each with a uniform duration of $T$, deliberately chosen to be smaller than the channel coherence time. For instance, this could be on the order of several seconds in a static IoT environment \cite{39}. At the beginning of each time frame, the devices are categorized into three types based on the diversity in data importance, computational capabilities, and channel qualities. The three classes are source devices (SDs), auxiliary devices (ADs), and independent devices (IDs), as illustrated in Fig. \ref{system}. Let $\boldsymbol{o}=\{o_1,o_2,...,o_m\}$, $\boldsymbol{p}=\{p_1,p_2,...,p_m\}$, and $\boldsymbol{q}=\{q_1,q_2,...,q_{N-2m}\}$ denote the sets of SDs, ADs, and IDs, respectively, where $2m \leq N$ and $\boldsymbol{o}\cup\boldsymbol{p}\cup\boldsymbol{q}=\mathcal{N}$. Each pair $\{o_i,p_i\}$, $\forall i\in \{1,2,...,m\}$, forms a collaborative cluster, where $m$ is the number of collaborative clusters. As shown in Fig. \ref{data}, the data of each source device (SD) is partitioned into three parts for processing at the local device, HAP, and corresponding auxiliary device (AD), respectively. Specifically, for the portion of an SD's data that is designated for processing at the corresponding AD, it will first be transmitted to the HAP and then forwarded to the AD by the HAP. This avoids communication issues due to distance limitations between the SD and AD. The data of each AD is processed locally. The data of each independent device (ID) is partitioned into two parts for processing at the local device and HAP, respectively. The main symbols and notations are summarized in Table \ref{notation}.

\begin{figure}[t]
\centerline{\includegraphics[width=0.45\textwidth]{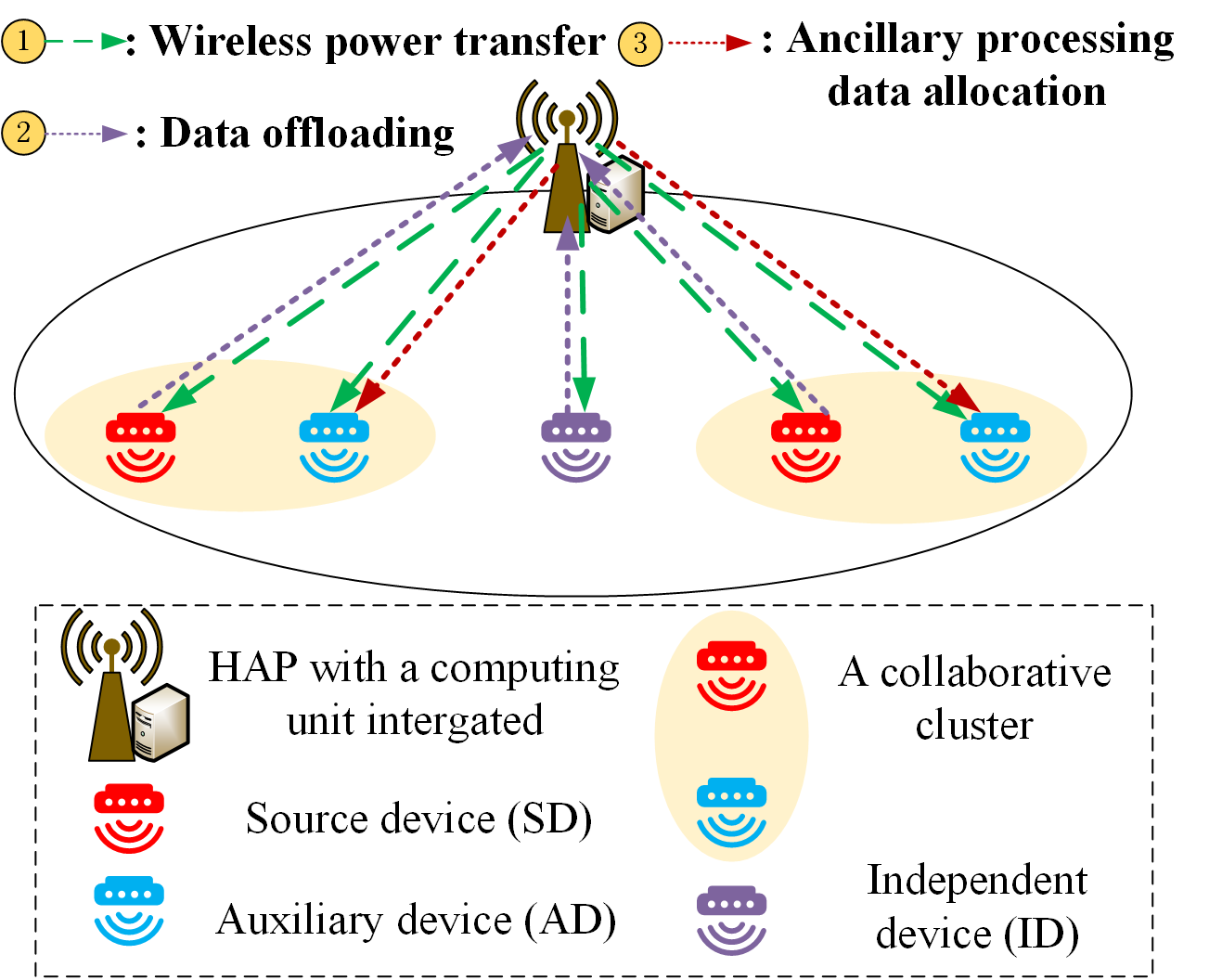}}
\caption{The wireless-powered MEC network with multi-user cooperation.}
% \vspace{-0.5cm}
\label{system}
\end{figure}

\begin{figure}[t]
\centerline{\includegraphics[width=0.45\textwidth]{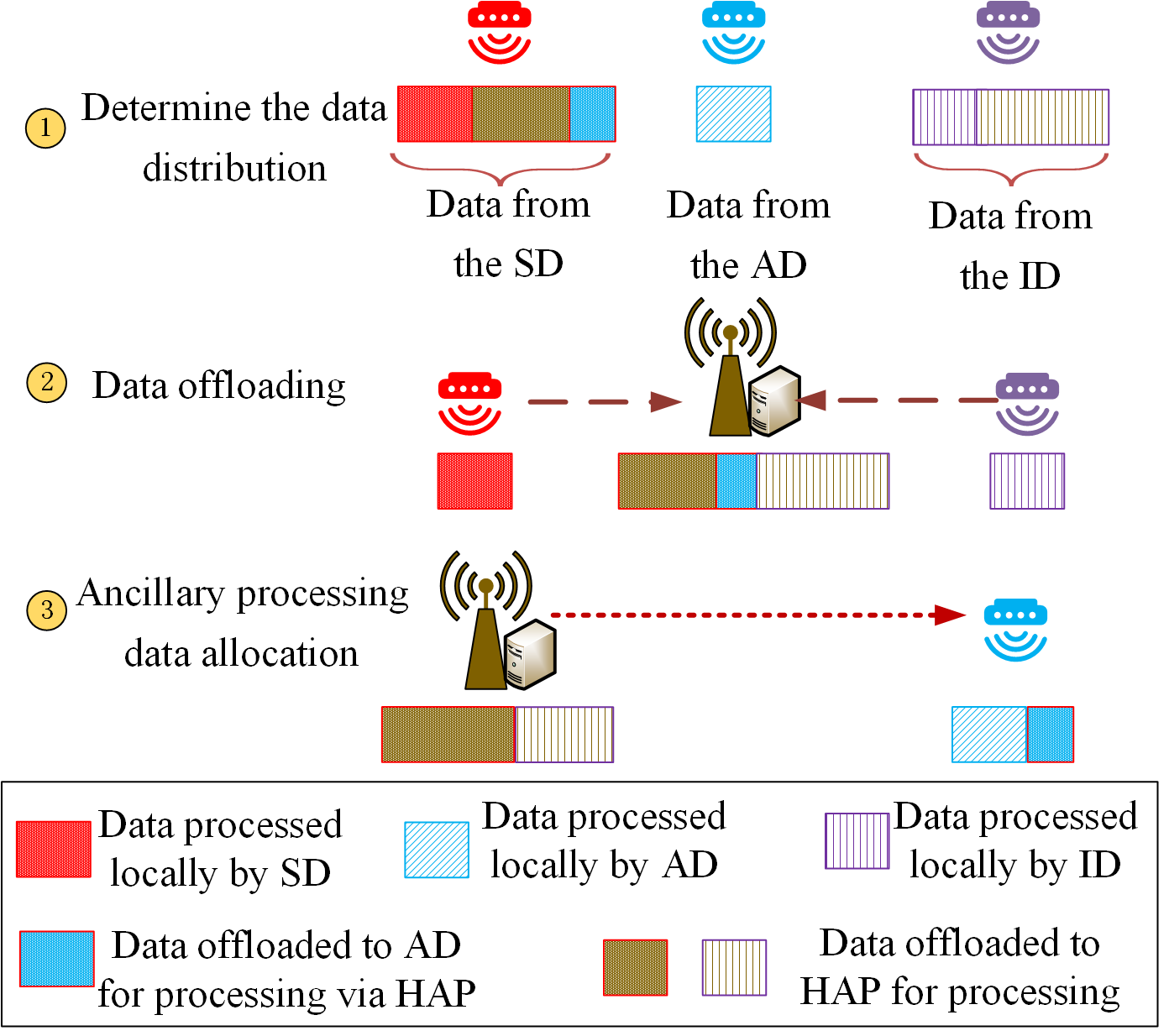}}
\caption{The data allocation for three types of devices. (The SD and AD form a collaborative cluster.)}
% \vspace{-0.5cm}
\label{data}
\end{figure}

\begin{figure*}[]
\centerline{\includegraphics[width=\textwidth]{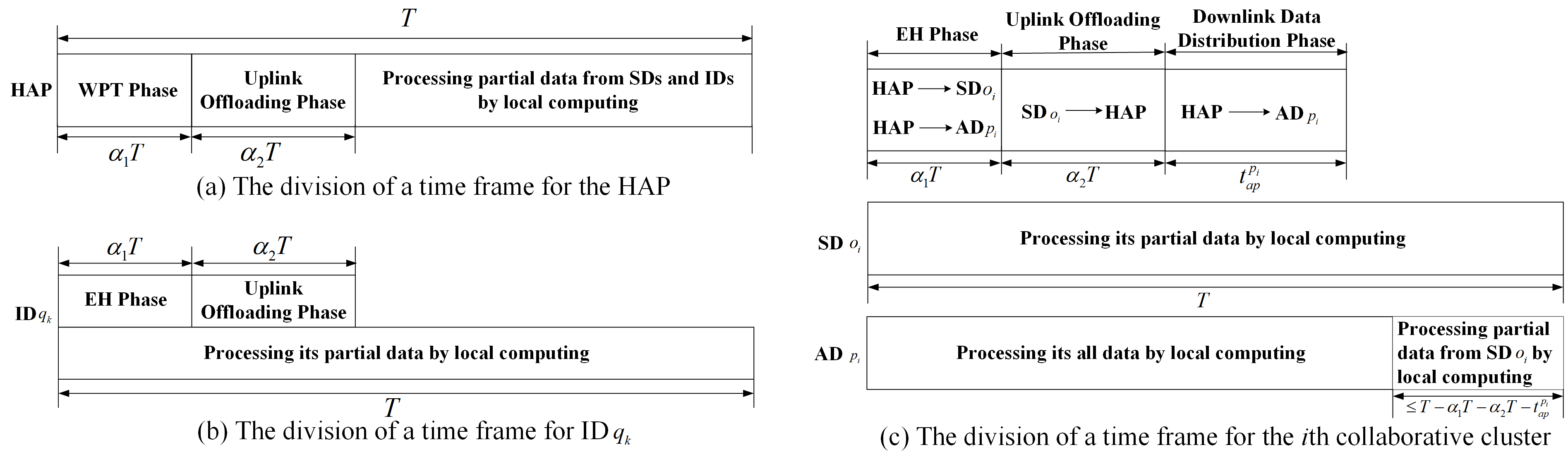}}
\caption{The harvest-and-offloading protocol with UC and OFDMA.}
% \vspace{-0.5cm}
\label{timeframe}
\end{figure*}

% \begin{figure*}[h]
%   \centering
%   \begin{minipage}[b]{0.48\textwidth}
%     \centering
%     \begin{subfigure}[b]{\textwidth}
%       \includegraphics[width=\textwidth]{}
%       \caption{Subfigure 1a}
%     \end{subfigure}
%     \begin{subfigure}[b]{\textwidth}
%       \includegraphics[width=\textwidth]{}
%       \caption{Subfigure 1b}
%     \end{subfigure}
%   \end{minipage}
%   \hfill
%   \begin{minipage}[b]{0.48\textwidth}
%     \centering
%     \begin{subfigure}[b]{\textwidth}
%       \includegraphics[width=\textwidth]{}
%       \caption{Subfigure 2}
%     \end{subfigure}
%   \end{minipage}
%   \caption{Overall Caption}
% \end{figure*}

\subsection{Harvest-and-Offloading Protocol}

To avoid co-channel interference and facilitate collaboration among users as well as between users and HAP, we propose a harvest-and-offload protocol utilizing UC and OFDMA, as depicted in Fig. \ref{timeframe}. This protocol achieves collaboration among users as well as between users and HAP by planning the time allocation for different categories of IoT devices and HAP in each time frame. It is assumed that the same frequency band is utilized for both RF energy signals and data transmission. Moreover, based on OFDMA, the system bandwidth is divided into $N$ sub-channels. The HAP randomly assigns each device to a sub-channel to avoid inter-device interference. Consistent with \cite{20}, we adopt a quasistatic scenario where the wireless channel remains static within each time block but varies across blocks.

As shown in Fig. \ref{timeframe}(a), the time block of the HAP is divided into three stages. The first stage, with a duration of $\alpha_1T$ where $\alpha_1 \in (0,1]$, is allocated for WPT. The second stage, lasting for $\alpha_2T$ where $\alpha_2 \in [0,1)$, involves the HAP receiving offloaded data from all devices, including data designated for processing at the HAP and ADs. During the remaining time of $T-(\alpha_1+\alpha_2)T$, the HAP processes data received from ADs and IDs, which is intended for processing at the HAP.

As illustrated in Fig. \ref{timeframe}(b), IDs can locally process a portion of their data throughout the entire time block. Concurrently, during the initial $\alpha_1T$ of each time frame, it is dedicated to energy harvesting (EH). Subsequently, in the following $\alpha_2T$, IDs offload the data prepared for processing at the HAP through uplinks. It is noteworthy that the time for data offloading from each ID must be less than or equal to $\alpha_2T$.

Fig. \ref{timeframe}(c) illustrates the time allocation and interaction process between two devices in a cooperative cluster. In the initial $\alpha_1T$ of each time frame, both the SD $o_i$ and AD $p_i$ undergo EH. In the subsequent $\alpha_2 T$, SD $o_i$ offloads the data prepared for processing at both the HAP and AD $p_i$ to the HAP via the uplink. Notably, the actual data offloading time from SD $o_i$ to the HAP must be less than or equal to $\alpha_2 T$. However, it is only after $\alpha_1T+\alpha_2T$ that the HAP begins transmitting data to ADs, which is designated for processing by ADs and originates from SDs.

The time $t_{ap}^{p_i}$ for HAP to send data to AD $p_i$ through the downlink depends on their transmission rate and the amount of data that SD $o_i$ allocates for processing by AD $p_i$. Since AD $p_i$ can only start assisting processing after receiving data from SD $o_i$, the assisting processing time of AD $p_i$ cannot exceed $T-(\alpha_1+\alpha_2)T-t_{ap}^{p_i}$. Before the assisting processing, AD $p_i$ needs to process all its data locally. Additionally, throughout the entire time block, SD $o_i$ can process a portion of its data locally. It is worth noting that the delay for results collection is ignored, as the size of the results is negligible compared to the input data size, as stated in \cite{36,37}.

The proposed protocol utilizes OFDMA to reduce the possibility of co-channel interference among devices. It employs Time Division Multiplexing (TDD) on sub-channels, enabling energy harvesting and data transmission to share the same frequency band. This approach enhances both spectral and energy efficiency. Simultaneously, the protocol promotes efficient collaboration among various system components. It achieves this by strategically allocating time among different devices, ensuring sufficient time for energy harvesting, data transmission, and processing within each timeframe. Nevertheless, the practical implementation of this protocol faces potential challenges. These include communication quality fluctuations caused by dynamic environmental factors, the influence of collaboration delays on real-time performance, and variations in energy harvesting efficiency due to environmental changes and device mobility. Due to space limitations, a detailed discussion of these challenges will not be pursued.

\subsection{Wireless Energy Transmission}
At the beginning of each time frame, the HAP broadcasts wireless power to each user for a duration of $\alpha_1T$. Therefore, the harvested energy by device $n$ can then be expressed as
\begin{align}
\label{harvested energy}
\left.E_n^h=\left\{\begin{array}{ll}\eta P h_n \alpha_1 T,&\;\eta P h_n \alpha_1 T\leq E_n^{max},\\E_n^{max},&\;\eta P h_n \alpha_1 T>E_n^{max},\end{array}\right.\right. n\in \mathcal{N},
\end{align}
where constant $\eta \in (0,1)$ denotes the EH efficiency factor and $P$ represents the transmit power at the HAP. $h_n$ represents the wireless channel gain between the $n$th IoT device and the HAP, assuming it is the same for both uplink and downlink due to channel reciprocity \cite{31,32}. $E_n^{max}$ denotes the maximum battery capacity of the $n$th IoT device.

It should be noted that while some existing works have discussed the non-linear EH model and argued that it is closer to real-world scenarios \cite{33}, numerous studies (e.g., \cite{34,35}) have reported that EH circuits typically operate within the linear region when the distance between the HAP and IoT devices is not very short. This is due to the significant influence of path loss in WPT systems on the amount of harvested energy. Additionally, the linear model is frequently employed as they are tractable and accurate for moderate harvested power \cite{8}. Thus, similar to previous studies \cite{8, 31, 34, 35}, we also adopt the linear EH model when $E_n^h \le E_n^{max}$ as shown in (\ref{harvested energy}) in this paper.

\subsection{Data processing and constraints}
\subsubsection{Source Devices}
The data of SD $o_i$ is partitioned into three components: local processing, processing by the HAP, and processing by the corresponding AD $p_i$. The data processed by AD $p_i$ needs to be offloaded to the HAP first and then transmitted from the HAP to AD $p_i$ for processing. Regarding energy consumption, the SD $o_i$’s energy expenditure can be divided into two main parts: local processing and computing offloading. These two aspects are analyzed as follows.
\begin{itemize}
\item \emph{Local processing:} An IoT device can harvest energy and process its data simultaneously \cite{25},\cite{38}. Let $f_{o_i}$ denote the processor's computational speed (cycles per second) of SD $o_i$, and let $l_{o_i}^{loc}$ denote the amount of data processed locally. Therefore, the time required for local processing is given by $t_{o_i}^{loc}=\frac{l_{o_i}^{loc}\phi_{o_i}}{f_{o_i}}$, where $\phi_{o_i}$ represents the number of cycles needed to process one bit of task data from SD $o_i$. Moreover, the local processing must be completed before the time frame ends, which can be expressed as $t_{o_i}^{loc}\leq T$. The energy consumed by local computation is given by $E_{o_i}^{loc}=k_{o_i}f_{o_i}^2l_{o_i}^{loc}\phi_{o_i}$, where $k_{o_i}$ represents the computational energy efficiency factor \cite{24}.
\item \emph{Computing offloading:} Let $l_{o_{i}}^{ap}$ denote the amount of SD $o_i$’s data processed at the HAP, and $l_{o_i}^{p_i}$ denote the amount of SD $o_i$’s data processed at the AD $p_i$. Therefore, the time required for SD $o_{i}$ to offload its data is given by $t_{o_{i}}^{off}=\frac{l_{o_{i}}^{ap}+l_{o_{i}}^{p_{i}}}{R_{o_{i}}^{ap}}$, where $R_{o_{i}}^{ap}=B\log_{2}(1+\frac{p_{o_i}h_{o_{i}}}{N_{0}})$ denotes the data transmission rate between SD $o_i$ and the HAP. Here, $B$ denotes the sub-channel bandwidth, $N_0$ denotes the receiver noise power, and $p_{o_i}$ represents the transmit power at SD $o_i$. The offloading time of SD $o_i$ needs to satisfy $t_{o_i}^{off}\leq\alpha_2T$, since the HAP receives the offloaded data during ${\alpha}_{2}T$, as shown in Fig. \ref{timeframe}. Additionally, the energy consumption due to offloading the data is given by $E_{o_i}^{off}=p_{o_i}t_{o_i}^{off}=\frac{p_{o_i}(l_{o_i}^{ap}+l_{o_i}^{p_i})}{R_{o_i}^{ap}}$.
\end{itemize}

According to the energy causality property, an IoT device cannot consume more energy than it harvests within a given time frame. Hence, the energy consumption of SD $o_i$ must adhere to the condition: $ E_{o_i}^{loc}+E_{o_i}^{off}\leq E_{o_i}^h$. Furthermore, the minimum data processing constraint dictates that each IoT device’s data needs to be processed at least $l_{th}$ bits per time frame. Therefore, the total amount of data from SD $o_i$ that needs to be processed must satisfy the following inequality:
\begin{align}
\label{minimum data processing constraint}
l_{o_i}^{loc}+l_{o_i}^{ap}+l_{o_i}^{p_i}\geq l_{th}.
\end{align}

\subsubsection{Auxiliary Devices}

The data of AD $p_i$ is processed exclusively on the local device. However, its energy consumption can be divided into two main components: local processing energy consumption and auxiliary processing energy consumption. This is due to the need to process a portion of SD $o_i$’s data. These two aspects are analyzed as follows.
\begin{itemize}
\item \emph{Auxiliary processing:} The time required for AD $p_i$ to complete auxiliary processing can be calculated as $t_{p_i}^{o_i}=\frac{l_{o_i}^{p_i}\phi_{o_i}}{f_{p_i}}$, where $f_{p_i}$ represents the computational speed of AD $p_i$'s processor. Additionally, auxiliary processing can only start after AD $p_i$ receives the data from the HAP. The transmission time is denoted as $t_{ap}^{p_i}=\frac{l_{o_i}^{p_i}}{R_{ap}^{p_i}}$, where $R_{ap}^{p_i}=Blog_2(1+\frac{Ph_{p_i}}{N_0})$ denotes the data transmission rate between AD $p_i$ and the HAP. Therefore, as shown in Fig. \ref{timeframe}, the following inequality must be satisfied:
\begin{align}
\label{time constraint1 of AD}
(\alpha_1+\alpha_2)T + t_{ap}^{p_i} + t_{p_i}^{o_i} \leq T.
\end{align}
Moreover, the energy consumption resulting from auxiliary processing can be calculated as $E_{p_i}^{o_i}=k_{p_i}f_{p_i}^2l_{o_i}^{p_i}\phi_{o_i}$.

\item \emph{Local processing:} Let $l_{p_i}^{loc}$ denote the amount of AD $p_i$’s data processed locally. The time required for AD $p_i$ to complete local processing can be calculated as $t_{p_i}^{loc}=\frac{l_{p_i}^{loc}\phi_{p_i}}{f_{p_i}}$. Additionally, the minimum data processing constraint requires that the total amount of data from AD $p_i$ that needs to be processed satisfies the following inequality:
\begin{align}
\label{minimum data processing constraint of AD}
l_{p_{i}}^{loc}\geq l_{th}.
\end{align}
And the energy consumption resulting from local processing can be calculated as  $E_{p_{i}}^{loc}=k_{p_{i}}f_{p_{i}}^{3}t_{p_{i}}^{loc}=k_{p_{i}}f_{p_{i}}^{2}l_{p_{i}}^{loc}\phi_{p_i} $. 
\end{itemize}
According to the energy causality property, the energy consumption of AD $p_i$ must adhere to the condition: $E_{p_i}^{loc}+E_{p_i}^{o_i}\leq E_{p_i}^h$. And the total computation time of AD $p_i$ includes both auxiliary processing and local processing, so the following inequality must hold:
\begin{align}
\label{time constraint2 of AD}
t_{p_i}^{loc}+t_{p_i}^{o_i}\leq T.
\end{align}

\subsubsection{Independent Devices}

The IDs do not form collaborative clusters with other devices. Consequently, the data of ID $q_k$ is divided into two parts: one part is processed locally, and the other is processed on the HAP. In terms of energy consumption, the energy expenditure of ID $q_k$ can be divided into two main parts: local processing and computing offloading. Next, these two aspects are analyzed as follows.

\begin{itemize}
\item \emph{Local processing:} Let $l_{q_k}^{loc}$ represent the amount of data allocated for local processing by ID $q_k$. Consequently, the local processing delay can be calculated using the formula $t_{q_{k}}^{loc}=\frac{l_{q_{k}}^{loc}\phi_{q_{k}}}{f_{q_{k}}}$, where $f_{q_k}$ represents the computational speed of ID $q_k$’s processor. The local processing energy consumption is given by $E_{q_{k}}^{loc}=k_{q_{k}}f_{q_{k}}^{2}l_{q_{k}}^{loc}\phi_{q_{k}}$. Similarly, the local processing delay needs to satisfy the inequality: $t_{q_{k}}^{loc}\leq T$.
\item \emph{Computing offloading:} Let $l_{q_{k}}^{ap}$ represent the amount of data from ID $q_k$ that is processed at the HAP. The offloading time can be calculated using the formula $t_{q_k}^{off}=\frac{l_{q_k}^{ap}}{R_{q_k}^{ap}}$, where $ R_{q_{k}}^{ap}=Blog_{2}(1+\frac{p_{q_k}h_{q_{k}}}{N_{0}}) $ represents the data transmission rate between ID $q_k$ and the HAP. Similarly, the offloading time of ID $q_k$ needs to satisfy the condition $t_{q_k}^{off}\leq\alpha_2T$. Additionally, the energy consumption due to data offloading can be obtained as $E_{q_{k}}^{off}=p_{q_k}t_{q_{k}}^{off}$.
\end{itemize}
Similarly, the total energy consumption of ID $q_k$ must satisfy the energy causality constraint represented by $ E_{q_{k}}^{loc}+E_{q_{k}}^{off}\leq E_{q_{k}}^{h} $. Moreover, there exists a constraint on the minimum amount of data to be processed, expressed as $ l_{q_{k}}^{loc}+l_{q_{k}}^{ap}\geq l_{th}$.

\subsubsection{Hybrid Access Point}

The HAP initially transmits RF signals within $\alpha_1 T$ and subsequently receives data offloaded by IoT devices within ${\alpha}_{2}T$. Processing commences only after data reception. Consequently, the processing time for the HAP is $(1-\alpha_1-\alpha_2)T$, and the workload which needs to be processed by the HAP is given by $ \sum_{i=1}^{m}l_{o_i}^{ap}\phi_{o_i}+\sum_{k=1}^{N-2m}l_{q_k}^{ap}\phi_{q_k}$. Thus, the following inequality must be satisfied: 
\begin{align}
\label{minimum data processing constraint of AD}
\frac{\sum_{i=1}^{m}l_{o_i}^{ap}\phi_{o_i}+\sum_{k=1}^{N-2m}l_{q_k}^{ap}\phi_{q_k}}{f_{ap}}\leq(1-\alpha_1-\alpha_2)T,
\end{align}
where $f_{ap}$ represents the processor’s computing speed (cycles per second) of the HAP.

\section{Problem Formulation and Analysis}

Our study aims to maximize the WSCR of the IIoT system by jointly optimizing the collaboration strategy $\boldsymbol{\psi} = \{m,\boldsymbol{o},\boldsymbol{p},\boldsymbol{q}\}$ among IoT devices and the HAP, the system time allocation $\boldsymbol{\alpha} = \{\alpha_{1},\alpha_{2}\}$, and the data allocation $\boldsymbol{l} = \{\boldsymbol{l_o,l_p,l_q}\}$, where $\boldsymbol{l_o}=[l_{o_1}^{loc},l_{o_1}^{ap},l_{o_1}^{p_1},...,l_{o_m}^{loc},l_{o_m}^{ap},l_{o_m}^{p_m}]^T$, $\boldsymbol{l_p}=[l_{p_1}^{loc},...,l_{p_m}^{loc}]^T$, and $\boldsymbol{l_q}=[l_{q_1}^{loc},l_{q_1}^{ap},...,l_{q_{N-2m}}^{loc},l_{q_{N-2m}}^{ap}]^T$. Additionally, the optimization is conducted considering energy causality, the minimum data processing requirements, and time constraints. Mathematically, the WSCR can be formulated as
\begin{align}
\mathcal{P}_1: \quad \max_{\boldsymbol{\boldsymbol{\psi,\alpha,l}}} & \sum_{i=1}^m\{w_{o_i}(l_{o_i}^{loc}+l_{o_i}^{ap}+l_{o_i}^{p_i})+ w_{p_i}l_{p_i}^{loc}\} \notag \\
& + \sum_{k=1}^{N-2m}w_{q_k}(l_{q_k}^{loc}+l_{q_k}^{ap}) \\
\text{s.t.} \quad (7a):&|\boldsymbol{o}|=|\boldsymbol{p}|=m\in \{0,1,2,...\left\lfloor\frac{N}{2}\right\rfloor \} \notag \\
(7b):&\boldsymbol{o}\cup\boldsymbol{p}\cup\boldsymbol{q}=\mathcal{N} \notag \\
(7c):&\frac{l_{o_i}^{loc}\phi_{o_i}}{f_{o_i}}\le T, \forall i \in \{1,2,...m\} \notag \\
(7d):&\frac{l_{o_i}^{ap}+l_{o_i}^{p_i}}{R_{o_i}^{ap}}\le \alpha_2T, \forall i \in \{1,2,...m\} \notag \\
(7e):&\frac{\phi_{o_i}l_{o_i}^{p_i}}{f_{p_i}}\le (1-\alpha_1-\alpha_2)T-\frac{l_{o_i}^{p_i}}{R_{ap}^{p_i}}, \forall i \notag \\
(7f):& \frac{\phi_{p_i}l_{p_i}^{loc}}{f_{p_i}}\le T-\frac{\phi_{o_i}l_{o_i}^{p_i}}{f_{p_i}}, \forall i \in \{1,2,...,m\} \notag \\
(7g):&\frac{l_{q_k}^{loc}\phi_{q_k}}{f_{q_k}} \le T,\forall k \in \{1,2,...,N-2m\} \notag \\
(7h):&\frac{l_{q_k}^{ap}}{R_{q_k}^{ap}}\le \alpha_2T,\forall k\in \{1,2,...,N-2m\} \notag \\
(7i):& \frac{\sum_{i=1}^m \phi_{o_i}l_{o_i}^{ap}+\sum_{k=1}^{N-2m}\phi_{q_k}l_{q_k}^{ap}}{f_{ap}}\le (1-\alpha_1-\alpha_2)T \notag \\
(7j):&\frac{l_{o_i}^{p_i}}{R_{ap}^{p_i}}\le(1-\alpha_1-\alpha_2)T,\forall i \in \{1,2,...,m\} \notag \\
(7k):& \alpha_1\in (0,1],\alpha_2 \in[0,1),\alpha_1+\alpha_2\le1 \notag \\
(7l):&k_{o_i}f_{o_i}^2l_{o_i}^{loc}\phi_{o_i}+\frac{p_{o_i}(l_{o_i}^{ap}+l_{o_i}^{p_i})}{R_{o_i}^{ap}}\le E_{o_i}^h,\forall i \notag \\
(7m):&k_{p_i}f_{p_i}^2l_{p_i}^{loc}\phi_{p_i}+k_{p_i}f_{p_i}^2l_{o_i}^{p_i}\phi_{o_i}\le E_{p_i}^h ,\forall i\notag\\
(7n):&k_{q_k}f_{q_k}^2l_{q_k}^{loc}\phi_{q_k}+\frac{p_{q_k}l_{q_k}^{ap}}{R_{q_k}^{ap}} \le E_{q_k}^h, \forall k \notag\\
(7o):&l_{o_i}^{loc} + l_{o_i}^{ap}+l_{o_i}^{p_i} \ge l_{th},\forall i \in \{1,2,...,m\} \notag\\
(7p):&l_{p_i}^{loc} \ge l_{th},\forall i \in \{1,2,...,m\} \notag\\
(7q):&l_{q_k}^{loc}+l_{q_k}^{ap} \ge l_{th},\forall k \in \{1,2,...,N-2m\}\notag\\
(7r):&l_{o_i}^{loc},l_{o_i}^{ap},l_{o_i}^{p_i},l_{q_k}^{loc},l_{q_k}^{ap}\ge0,\forall i,\forall k,\notag
\end{align}
where $w_{o_i},w_{p_i},w_{q_k}$ are unique weights assigned to SD $o_i$, AD $p_i$, and ID $q_k$, respectively. Constraints $(7a)$ and $(7b)$ pertain to the formation of collaborative clusters, requiring an SD to collaborate with a unique AD. Constraints $(7c)$ and $(7d)$ concern the time overhead of SDs. Constraints $(7e)$ and $(7f)$ relate to the time overhead of ADs. Constraints $(7g)$ and $(7h)$ involve the time overhead of IDs. Constraints $(7i)$ to $(7k)$ govern the time allocation of the HAP. Constraints $(7l)$ to $(7n)$ pertain to the energy consumption causality for SDs, ADs, and IDs, respectively. Constraints $(7o)$ to $(7q)$ specify the minimum data processing requirements for SDs, ADs, and IDs, respectively. Constraint $(7r)$ pertains to the non-negativity for the data allocation.

According to our assumption, the energy consumed by the $n$th device for continuous data processing within a time frame should be greater than the energy collected\footnote{In wireless-powered MEC systems, wireless devices (WDs) are typically energy-constrained. This means a WD can exhaust all harvested energy within a given time frame by operating at maximum computing speed. Thus, in accordance with \cite{17,20,25,31}, we adopt this assumption. Exploring the potential remaining energy of each IoT device after a time frame will be a future focus of our research.}. Mathematically, this can be expressed as $k_nf_n^3T> E_n^h$. Based on this assumption, we give the following lemma.
\begin{lemma}
At the end of each time frame, all devices exhaust the energy collected during that time frame.
\end{lemma}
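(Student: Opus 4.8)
The plan is to prove the lemma by a perturbation/contradiction argument on an optimal solution of $\mathcal{P}_1$, with the standing hypothesis $k_n f_n^3 T > E_n^h$ supplying the needed time slack. The key structural observation I would isolate first is that the energy a device spends on \emph{on-device computation} is proportional to the computation time with factor $k_n f_n^3$: indeed $E_n^{loc}=k_n f_n^2 l_n^{loc}\phi_n = k_n f_n^3 t_n^{loc}$, and for an AD the auxiliary energy likewise satisfies $E_{p_i}^{o_i}=k_{p_i} f_{p_i}^3 t_{p_i}^{o_i}$. Hence a device's total computation time equals its total computation energy divided by $k_n f_n^3$, and since that energy never exceeds $E_n^h < k_n f_n^3 T$, the total computation time is \emph{strictly} below $T$. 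This strict gap is the lever the argument will use.

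Next I would argue by contradiction: suppose an optimal solution leaves some device $n$ with unused energy, i.e. its energy-causality constraint—$(7l)$, $(7m)$, or $(7n)$ according to the device class—holds strictly, with slack $\Delta E>0$. I would then increase that device's \emph{local} data volume $l_n^{loc}$ by $\delta = \Delta E/(k_n f_n^2 \phi_n)$, leaving all offloading and auxiliary variables and all other devices unchanged. This raises the local computation energy by exactly $\Delta E$, so the energy-causality constraint becomes tight (still feasible); it leaves the offloading-time constraints $(7d)$, $(7h)$, the HAP-processing constraint $(7i)$, and the cluster-formation constraints $(7a)$–$(7b)$ untouched, since none of those variables change, and it can only help the minimum-data constraints $(7o)$–$(7q)$. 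The only constraints at risk are the local-time bounds $(7c)$/$(7g)$ for SDs/IDs and the coupled computation-time bound $(7f)$ for ADs.

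To discharge these I would invoke the margin from the first step. After the perturbation the device's total computation energy is at most $E_n^h$, so its total computation time is at most $E_n^h/(k_n f_n^3) < T$. For an SD or ID this is precisely the local-time bound $(7c)$/$(7g)$, which therefore holds. For an AD the left-hand side of $(7f)$ is exactly $t_{p_i}^{loc}+t_{p_i}^{o_i}$, i.e. the total computation time, which remains below $T$; since $t_{p_i}^{o_i}$ is unchanged, constraints $(7e)$ and $(7j)$ are also unaffected. As every weight $w_{o_i},w_{p_i},w_{q_k}$ is strictly positive and $\delta>0$, the objective strictly increases, contradicting optimality. Therefore no slack can remain, and every device exhausts its harvested energy.

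I expect the main obstacle to be the auxiliary-device case, because there local processing competes with auxiliary processing for the same processor through $(7f)$, so free time cannot be assumed a priori. The resolution—and the precise reason the hypothesis $k_n f_n^3 T > E_n^h$ is exactly what is required—is that energy causality itself caps the \emph{combined} local-plus-auxiliary computation time strictly below $T$, guaranteeing the slack needed to absorb the extra local workload.
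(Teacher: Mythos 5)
Your proposal is correct and takes essentially the same route as the paper's proof: a perturbation--contradiction argument that increases the slack device's local data volume $l_n^{loc}$ to exhaust the residual energy, using the standing hypothesis $k_n f_n^3 T > E_n^h$ to guarantee feasibility, so the strictly increased WSCR contradicts optimality. The paper states this in two sentences and leaves the feasibility check implicit; your version simply fills in the constraint-by-constraint verification, including the auxiliary-device case where $(7f)$ couples local and assisting computation time, which the energy-causality bound $t_{p_i}^{loc}+t_{p_i}^{o_i}\le E_{p_i}^h/(k_{p_i}f_{p_i}^3)<T$ correctly resolves.
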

\begin{proof}
Assume that under the optimal solution of problem $\mathcal{P}_1$, there is a device that has not exhausted its energy. Based on the above assumption, we can increase the amount of data processed locally by that device to exhaust the remaining energy without violating the constraints in $\mathcal{P}_1$. As a result, the WSCR is clearly enhanced, which contradicts the initial assumptions. This proves Lemma 1. Hence, we can disregard the remaining energy from the last time frame.
\end{proof}

Several lemmas are given next to simplify the constraints of the problem.

\begin{lemma}
Constraints $(7c)$ and $(7g)$ must hold.
\end{lemma}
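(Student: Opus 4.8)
The plan is to show that $(7c)$ and $(7g)$ are not genuine restrictions but are automatically implied by the energy-causality constraints $(7l)$ and $(7n)$ together with the standing assumption $k_n f_n^3 T > E_n^h$. In other words, these two local-computation time constraints are redundant and may be discarded, which is precisely the kind of simplification the surrounding lemmas aim for.

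First I would treat $(7c)$. Starting from the SD energy-causality constraint $(7l)$, I note that the offloading-energy term $\frac{p_{o_i}(l_{o_i}^{ap}+l_{o_i}^{p_i})}{R_{o_i}^{ap}}$ is non-negative, by $(7r)$ and the positivity of $p_{o_i}$ and $R_{o_i}^{ap}$. Dropping it yields $k_{o_i} f_{o_i}^2 l_{o_i}^{loc}\phi_{o_i} \le E_{o_i}^h$, so that $l_{o_i}^{loc}\phi_{o_i} \le E_{o_i}^h/(k_{o_i} f_{o_i}^2)$. Dividing by $f_{o_i}$ gives $\frac{l_{o_i}^{loc}\phi_{o_i}}{f_{o_i}} \le \frac{E_{o_i}^h}{k_{o_i} f_{o_i}^3}$, and the assumption $k_{o_i} f_{o_i}^3 T > E_{o_i}^h$ forces the right-hand side to be strictly smaller than $T$. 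Hence $\frac{l_{o_i}^{loc}\phi_{o_i}}{f_{o_i}} < T$, so $(7c)$ is satisfied at every feasible point.

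The argument for $(7g)$ is identical after replacing the SD index $o_i$ by the ID index $q_k$ and invoking the ID energy-causality constraint $(7n)$ in place of $(7l)$: discarding the non-negative offloading energy gives $k_{q_k} f_{q_k}^2 l_{q_k}^{loc}\phi_{q_k} \le E_{q_k}^h$, whence $\frac{l_{q_k}^{loc}\phi_{q_k}}{f_{q_k}} \le \frac{E_{q_k}^h}{k_{q_k} f_{q_k}^3} < T$ by the same assumption.

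I do not expect any serious obstacle here; the proof is a short chain of inequalities. The only point requiring care is to justify that the dropped offloading-energy term is genuinely non-negative so that it can be removed from the causality constraint, which is immediate from $(7r)$ and the strict positivity of the transmit powers and transmission rates. The conceptual content is simply that the energy budget already caps the amount of locally processed data more tightly than the length of the time frame does, because by assumption a device cannot even afford to run its processor at full speed for the entire frame.
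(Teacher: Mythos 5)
Your proof is correct and follows essentially the same route as the paper: the paper's (terser) argument likewise deduces $\frac{l_{o_i}^{loc}\phi_{o_i}}{f_{o_i}} < T$ from the energy-causality constraint combined with the standing assumption $k_n f_n^3 T > E_n^h$, and handles $(7g)$ by the analogous substitution. You have merely made explicit the intermediate step of dropping the non-negative offloading-energy term from $(7l)$ and $(7n)$, which the paper leaves implicit.
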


\begin{proof}Given $k_nf_n^3T>E_n^h$, it follows that the actual local processing time of SD $o_i$ cannot exceed $T$, i.e., $ \frac{l_{o_{i}}^{loc}\phi_{o_i}}{f_{o_{i}}}<T$, thus satisfying constraint $(7c)$. Similarly, it can be demonstrated that constraint $(7g)$ must also be met.\end{proof}

\begin{lemma}
The constraint $(7j)$ must hold.
\end{lemma}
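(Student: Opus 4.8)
The plan is to establish that constraint $(7j)$ is redundant, that is, it follows automatically from constraint $(7e)$ and therefore need not be imposed separately. First I would recall the form of $(7e)$, which originates from the AD timing requirement $(3)$ and can be rearranged as
$$\frac{\phi_{o_i}l_{o_i}^{p_i}}{f_{p_i}}+\frac{l_{o_i}^{p_i}}{R_{ap}^{p_i}}\le (1-\alpha_1-\alpha_2)T.$$
Here the left-hand side is the sum of two physically meaningful quantities: the AD auxiliary-processing time $\frac{\phi_{o_i}l_{o_i}^{p_i}}{f_{p_i}}$ and the HAP-to-AD transmission time $\frac{l_{o_i}^{p_i}}{R_{ap}^{p_i}}$, whose total is bounded by the residual HAP time $(1-\alpha_1-\alpha_2)T$.

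The key step is to observe that the auxiliary-processing term $\frac{\phi_{o_i}l_{o_i}^{p_i}}{f_{p_i}}$ is non-negative, since $\phi_{o_i}>0$ and $f_{p_i}>0$ by their definitions, and $l_{o_i}^{p_i}\ge 0$ by constraint $(7r)$. Dropping a non-negative term from the left-hand side of an upper-bounded inequality can only weaken it, so
$$\frac{l_{o_i}^{p_i}}{R_{ap}^{p_i}}\le \frac{\phi_{o_i}l_{o_i}^{p_i}}{f_{p_i}}+\frac{l_{o_i}^{p_i}}{R_{ap}^{p_i}}\le (1-\alpha_1-\alpha_2)T,$$
which is precisely constraint $(7j)$. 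I would carry out this deduction for an arbitrary cluster index $i\in\{1,\dots,m\}$, so that the conclusion holds for all collaborative clusters simultaneously.

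I expect no genuine obstacle here, as the entire argument reduces to a single non-negativity observation together with the transitivity of "$\le$". The only point deserving mild attention is the direction of the inequality: because the discarded auxiliary-processing term appears on the left side of a "$\le$" whose right side is fixed, its removal preserves feasibility rather than violating it. This confirms that $(7j)$ imposes no additional restriction beyond $(7e)$ and may safely be dropped from the description of the feasible set, which is the simplification the lemma is invoked to justify.
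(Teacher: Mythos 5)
Your proposal is correct and matches the paper's proof essentially verbatim: both arguments rearrange constraint $(7e)$ to $\frac{l_{o_i}^{p_i}}{R_{ap}^{p_i}}\le(1-\alpha_1-\alpha_2)T-\frac{\phi_{o_i}l_{o_i}^{p_i}}{f_{p_i}}$ and then invoke the non-negativity of $l_{o_i}^{p_i}$ from constraint $(7r)$ (with $\phi_{o_i},f_{p_i}>0$) to drop the auxiliary-processing term and conclude $(7j)$. Your write-up is if anything slightly more explicit than the paper's about why the discarded term is non-negative, but the route is identical.
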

\begin{proof}
Due to constraint $(7e)$, $\frac{l_{o_i}^{p_i}}{R_{qp}^{p_i}}\leq(1-\alpha_1-\alpha_2)T-\frac{\phi_{o_i}l_{o_i}^{p_i}}{f_{p_i}}$. Additionally, based on constraint $(7r)$, it can be derived that $\frac{l_{o_i}^{p_i}}{R_{ap}^{p_i}}\le(1-\alpha_1-\alpha_2)T$. As a result, constraint $(7j)$ can be removed.
\end{proof}

\begin{lemma}
The constraint $(7k)$ must hold.
\end{lemma}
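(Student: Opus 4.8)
The plan is to show, in the same spirit as Lemma 3, that $(7k)$ is redundant: each of its pieces is already forced by the remaining constraints together with the non-negativity $(7r)$ and the energy-harvesting model $(\ref{harvested energy})$, so $(7k)$ can be dropped without altering the feasible set.

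First I would dispatch the parts that follow immediately. The left-hand side of $(7i)$ is a sum of terms $\phi_{o_i}l_{o_i}^{ap}/f_{ap}$ and $\phi_{q_k}l_{q_k}^{ap}/f_{ap}$, each non-negative by $(7r)$ and the positivity of the cycle counts and of $f_{ap}$; hence the right-hand side $(1-\alpha_1-\alpha_2)T$ must be non-negative, giving $\alpha_1+\alpha_2\le 1$. Next, $\alpha_2\ge 0$ follows because at least one device is an SD or an ID (if $m=0$ then every device is an ID), and the corresponding offloading-time bound $(7d)$ or $(7h)$ has a non-negative left-hand side, forcing $\alpha_2 T\ge 0$. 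Combining $\alpha_1+\alpha_2\le 1$ with $\alpha_2\ge 0$ then yields $\alpha_1\le 1$.

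The crux is the strict positivity $\alpha_1>0$, and here I would argue by contradiction. If $\alpha_1=0$, the linear model $(\ref{harvested energy})$ gives $E_n^h=\eta P h_n\alpha_1 T=0$ for every device. The energy-causality constraints $(7l)$--$(7n)$ then force every energy term to vanish; since the local-computing energy $k f^2\phi\, l^{loc}$ and the offloading energy are strictly increasing in the associated data amounts, this drives all of $l_{o_i}^{loc},l_{o_i}^{ap},l_{o_i}^{p_i},l_{p_i}^{loc},l_{q_k}^{loc},l_{q_k}^{ap}$ to zero, contradicting the minimum-processing requirements $(7o)$--$(7q)$ with $l_{th}>0$. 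Thus $\alpha_1>0$, and combining this with $\alpha_1+\alpha_2\le 1$ gives $\alpha_2<1$. Collecting the pieces recovers $\alpha_1\in(0,1]$, $\alpha_2\in[0,1)$, and $\alpha_1+\alpha_2\le1$, so $(7k)$ holds automatically. I expect this positivity step to be the main obstacle: every other inequality in $(7k)$ is a one-line consequence of $(7i)$ and non-negativity, whereas $\alpha_1>0$ requires threading the harvested-energy model through the energy-causality constraints down to the minimum-processing requirements.
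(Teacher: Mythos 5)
Your proof is correct and follows essentially the same route as the paper's: $\alpha_1+\alpha_2\le 1$ from $(7i)$ with non-negativity, $\alpha_2\ge 0$ from the offloading-time bounds, and $\alpha_1>0$ by pushing the zero-harvested-energy consequence of $\alpha_1=0$ through the energy-causality constraints into a contradiction with the minimum-processing requirements. The only difference is that you spell out as an explicit contradiction (and cover the $m=0$ edge case) what the paper compresses into citing $(7h)$, $(7l)$, and $(7o)$ as making $\alpha_1>0$ and $\alpha_2\ge 0$ ``evident.''
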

\begin{proof}
From constraints $(7i)$ and $(7r)$, it follows that $\begin{aligned}0\le&\frac{\sum_{i=1}^m\phi_{o_i}l_{o_i}^{ap}+\sum_{k=1}^{N-2m}\phi_{q_k}l_{q_k}^{ap}}{f_{ap}}\le(1-\alpha_1-\alpha_2)T\end{aligned}$, which implies $ \alpha_1+\alpha_2\le 1$. Moreover, from constraints $(7h)$, $(7l)$ and $(7o)$, it is evident that $ \alpha_1>0$ and $ \alpha_2\geq0$. Consequently, we can infer that $ \alpha_1\in(0,1]$ and $ \alpha_2\in[0,1)$. This establishes that constraint $(7k)$ holds.
\end{proof}

Additionally, according to (\ref{harvested energy}), it is apparent that constraints $(7l)-(7n)$ display nonlinearity. To simplify problem resolution, we reformulate these three constraints as:
\begin{align}
\label{constraints}
&k_{o_i}f_{o_i}^2l_{o_i}^{loc}\phi_{o_i}+\frac{p_{o_i}(l_{o_i}^{ap}+l_{o_i}^{p_i})}{R_{o_i}^{ap}}\le \eta Ph_{o_i}\alpha_1T,\forall i, \notag \\
&k_{o_i}f_{o_i}^2l_{o_i}^{loc}\phi_{o_i}+\frac{p_{o_i}(l_{o_i}^{ap}+l_{o_i}^{p_i})}{R_{o_i}^{ap}}\le E_{o_i}^{max},\forall i, \notag \\
&k_{p_i}f_{p_i}^2l_{p_i}^{loc}\phi_{p_i}+k_{p_i}f_{p_i}^2l_{o_i}^{p_i}\phi_{o_i}\le \eta Ph_{p_i}\alpha_1T,\forall i, \\
&k_{p_i}f_{p_i}^2l_{p_i}^{loc}\phi_{p_i}+k_{p_i}f_{p_i}^2l_{o_i}^{p_i}\phi_{o_i}\le E_{p_i}^{max},\forall i, \notag \\
&k_{q_k}f_{q_k}^2l_{q_k}^{loc}\phi_{q_k}+\frac{p_{q_k}l_{q_k}^{ap}}{R_{q_k}^{ap}} \le \eta Ph_{q_k}\alpha_1T,\forall k, \notag \\
&k_{q_k}f_{q_k}^2l_{q_k}^{loc}\phi_{q_k}+\frac{p_{q_k}l_{q_k}^{ap}}{R_{q_k}^{ap}} \le E_{q_k}^{max},\forall k. \notag
\end{align}

Building upon the lemmas and constraint transformations mentioned earlier, the original problem can be reduced to:
\begin{align}
\label{P2}
\mathcal{P}_2: \quad \max_{\boldsymbol{\boldsymbol{\psi,\alpha,l}}} & \sum_{i=1}^m\{w_{o_i}(l_{o_i}^{loc}+l_{o_i}^{ap}+l_{o_i}^{p_i})+ w_{p_i}l_{p_i}^{loc}\}  \\
& + \sum_{k=1}^{N-2m}w_{q_k}(l_{q_k}^{loc}+l_{q_k}^{ap}) \notag \\
\text{s.t.} \;
(9a):&(7a),(7b),(7d)-(7f),(7h),(7i),(7o)-(7r),(8)\notag
\end{align}

\vspace{-0.1cm}
\begin{lemma}
The problem $\mathcal{P}_2$ is NP-hard.
\end{lemma}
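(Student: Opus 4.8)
The plan is to prove NP-hardness by exhibiting a polynomial-time reduction from a known NP-hard problem to a restricted family of instances of $\mathcal{P}_2$. The key observation I would exploit is the decomposition already used in the paper: once the collaboration strategy $\boldsymbol{\psi}=\{m,\boldsymbol{o},\boldsymbol{p},\boldsymbol{q}\}$ is fixed, the remaining optimization over $(\boldsymbol{\alpha},\boldsymbol{l})$ is a continuous resource-allocation problem handled by the interior-point algorithm developed later, whereas the combinatorial difficulty resides entirely in the discrete choice of $\boldsymbol{\psi}$ --- how many clusters $m$ to form, which devices act as SDs, ADs, or IDs, and how SDs are paired with ADs. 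The reduction therefore needs to encode a hard combinatorial problem into this role-assignment-and-pairing layer, and I would argue hardness by restricting $\mathcal{P}_2$ rather than attacking the full coupled problem directly.

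First I would construct a restricted subfamily of $\mathcal{P}_2$ in which the inner continuous problem degenerates to a closed form, so that each collaboration strategy is assigned a single deterministic objective value. Concretely, I would fix the time split $(\alpha_1,\alpha_2)$ and choose the device parameters (the weights $w$, the cycle counts $\phi$, the rates $R$, the efficiency factors $k$, and the threshold $l_{th}$) so that the per-device data allocations are forced toward their extreme feasible values, leaving only a discrete degree of freedom. In this regime the shared HAP processing constraint $(7i)$, namely $\sum_{i}\phi_{o_i}l_{o_i}^{ap}+\sum_{k}\phi_{q_k}l_{q_k}^{ap}\le(1-\alpha_1-\alpha_2)f_{ap}T$, acts as a single capacity budget, while the energy-causality constraints $(8)$ tie each device's admissible workload to $\alpha_1$. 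Selecting which devices offload to the HAP, and in which role, so as to maximize the weighted processed data under this common budget is then exactly a $0/1$ knapsack decision, which is NP-hard; I would map a given knapsack (or partition) instance to these parameters in polynomial time.

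I would then establish the equivalence in both directions: a YES-instance of the source problem yields a collaboration strategy whose optimal objective meets a prescribed threshold, and conversely any strategy attaining that threshold induces a solution to the source instance. The \emph{main obstacle} I anticipate is controlling the reduction so that no unintended collaboration strategy can outperform the intended one --- in particular, ruling out degenerate configurations (forming spurious clusters, or swapping SD/AD roles within a pair) that might exploit the continuous variables to exceed the threshold. Guaranteeing feasibility of the constructed instance, verifying that the inner continuous program truly collapses to the intended closed form for \emph{every} discrete choice, and ensuring the threshold separates YES- from NO-instances exactly, is where the bulk of the technical care is required. Since the pairing structure itself might instead favor a reduction from three-dimensional matching or the generalized assignment problem, I would ultimately adopt whichever source problem yields the cleanest such separation.
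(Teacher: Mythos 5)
Your overall direction---a polynomial-time reduction \emph{from} a known NP-hard problem into the discrete role-assignment layer of $\mathcal{P}_2$---is not the paper's route, and it is worth noting that it is the methodologically stronger one. The paper's proof is much shorter and logically looser: it fixes the pairing to $\{(i,i+m)\}$, observes that the resulting restricted problem $\mathcal{P}_3$ has a linear objective, linear constraints, and one integer variable $m$, and concludes NP-hardness ``because MILP is NP-hard.'' Strictly speaking, that argument only shows a special case of $\mathcal{P}_2$ is an \emph{instance} of MILP, i.e., reduces \emph{to} an NP-hard class, which every problem in NP does; moreover, since $m$ ranges over the polynomially many values $\{1,\dots,\lfloor N/2\rfloor\}$, $\mathcal{P}_3$ is in fact solvable exactly by $\lfloor N/2\rfloor$ LP calls, so the paper's chosen special case cannot by itself certify hardness. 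You correctly sensed that a real hardness proof must encode a hard combinatorial problem and argue both directions of a threshold equivalence.

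However, your proposal has a genuine gap, and the specific gadget you sketch would fail. You propose to turn the HAP capacity constraint $(7i)$ into a knapsack budget with devices as items, claiming parameters can force the data allocations ``toward their extreme feasible values.'' But every data variable $l_{\cdot}^{\cdot}$ in $\mathcal{P}_2$ is continuous (constraint $(7r)$ imposes only nonnegativity), so for \emph{any} fixed collaboration strategy $\boldsymbol{\psi}$ the inner problem is an LP, and ``maximize weighted processed data under a single capacity budget'' is then \emph{fractional} knapsack, solvable greedily in polynomial time; no parameter choice can restore the 0/1 structure, because the LP is always free to split a device's workload. The only genuinely discrete objects in $\mathcal{P}_2$ are $m$ and the partition/pairing $\boldsymbol{\psi}$, so a valid reduction must lodge the hardness in the SD--AD pairing structure---the three-dimensional-matching or GAP alternative you mention only in passing---and note that plain bipartite SD--AD matching with additive values is itself polynomial, so even that gadget needs care. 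You never construct this gadget, never verify the claimed collapse of the inner LP for every discrete choice, and never exhibit the YES/NO threshold separation; these deferred steps are precisely where the proof lives, so as it stands the proposal establishes nothing, even though its intended architecture is more rigorous than the paper's own two-line argument.
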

\begin{proof}
To establish the NP-hardness of $\mathcal{P}_2$, we present a specific case of the problem and demonstrate its NP-hardness. Assume that given the number of collaborative clusters $m>0$, the construction of collaborative clusters is specified as $ \{(i,i+m)\mid i\in\{1,2,... .m\}\}$, where ${(i,i+m)}$ represents the indices of the corresponding IoT devices. Consequently, we can obtain $ \boldsymbol{o}=\{i\mid i\in\{1,2,...m\}\}, \boldsymbol{p}=\{i+m\mid i\in\{1,2,...m\}\}$, and $ \boldsymbol{q}=\{2m+1,2m+2,....N\}$. Based on this, problem $\mathcal{P}_2$ is reduced to $\mathcal{P}_3$. It can be observed that the objective function and constraints of $\mathcal{P}_3$ are both linear. Additionally, the optimization variable $m$ is an integer. As a result, $\mathcal{P}_3$ constitutes a mixed-integer linear programming (MILP) problem known as NP-hard. Therefore, this shows that problem $\mathcal{P}_2$ is also NP-hard.
\end{proof}
\begin{align}
\label{P3}
\mathcal{P}_3: \quad \max_{m,\boldsymbol{\alpha,l}}  &\sum_{i=1}^m\{w_{o_i}(l_{o_i}^{loc}+l_{o_i}^{ap}+l_{o_i}^{p_i})+ w_{p_i}l_{p_i}^{loc}\} \\
& + \sum_{k=1}^{N-2m}w_{q_k}(l_{q_k}^{loc}+l_{q_k}^{ap}) \notag \\
\text{s.t.} \quad
(10a):&(7d)-(7f),(7h),(7i),(7o)-(7r),(8)\notag\\
(10b):&m\in\{1,2,...\left\lfloor\frac{N}{2}\right\rfloor\}\notag
\end{align}

\begin{figure}[t]
\centerline{\includegraphics[width=0.45\textwidth]{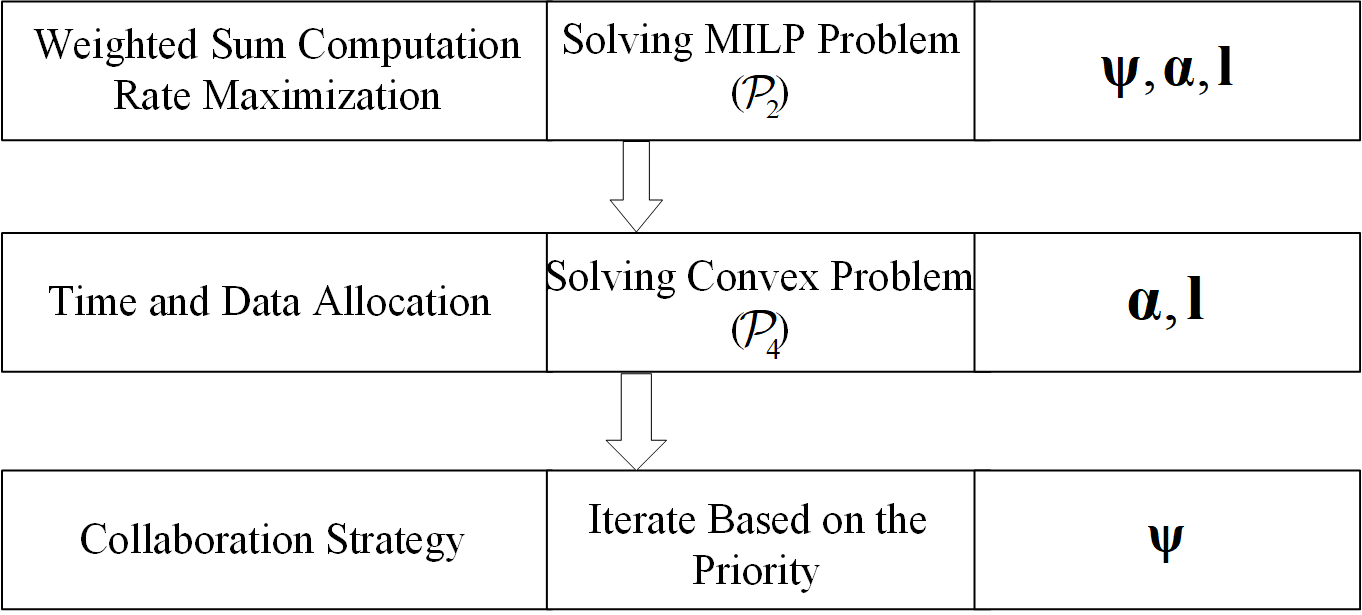}}
\caption{The two-level optimization structure for sloving $\mathcal{P}_2$.}
\label{solution}

\end{figure}

\section{Solution Approach}
In this section, efficient algorithms are proposed to solve problem $\mathcal{P}_2$. As shown in Fig. \ref{solution}, problem ($\mathcal{P}_2$) can be decomposed into two sub-problems, namely, time and data allocation ($\mathcal{P}_4$) and collaboration strategy. Thus, we can optimize the time and data allocation $\{\boldsymbol{\alpha}, \boldsymbol{l}\}$ for a given collaboration strategy $\boldsymbol{\psi}^0$ first. Then, we can search among all possible collaboration strategies to find an optimal or satisfying sub-optimal collaboration strategy $\boldsymbol{\psi}^*$. However, due to the exponentially large search space, it is computationally prohibitive when $N$ is large. Therefore, a low-complexity algorithm is needed. Next, we provide detailed solutions for these two subproblems.

\subsection{Time and Data Allocation}\label{IV-A}
When the collaborative strategy $\boldsymbol{\psi}$ is fixed, $\mathcal{P}_2$ transforms into the following problem:

\begin{align}
\label{P4}
\mathcal{P}_4: \quad \max_{\boldsymbol{\boldsymbol{\alpha,l}}} & \sum_{i=1}^m\{w_{o_i}(l_{o_i}^{loc}+l_{o_i}^{ap}+l_{o_i}^{p_i})+ w_{p_i}l_{p_i}^{loc}\}  \\
& + \sum_{k=1}^{N-2m}w_{q_k}(l_{q_k}^{loc}+l_{q_k}^{ap}) \notag \\
\text{s.t.} \;
(11a):&(7d)-(7f),(7h),(7i),(7o)-(7r),(8)\notag
\end{align}

Problem $\mathcal{P}_4$ is a convex optimization problem because both the objective function and constraints are linear in terms of the optimization variables. In order to facilitate the solution of the problem, $\mathcal{P}_4$ is first transformed into the standard matrix form:
\begin{align}
\label{P5}
\mathcal{P}_5: \quad \min_{\boldsymbol{\boldsymbol{x}_1}}\quad & \boldsymbol{c}_1^T\boldsymbol{x}_1 \\
\text{s.t.} \quad
(12a):&\boldsymbol{A}_1\boldsymbol{x}_1\preceq\boldsymbol{b}\notag\\
(12b):&\boldsymbol{x}_1\succeq \boldsymbol{0}.\notag
\end{align}

The vector $\boldsymbol{x}_1$ represents the optimization variables, and it is defined as $\boldsymbol{x}_1=[\alpha_1,\alpha_2,l_{o_1}^{loc},l_{o_1}^{ap},l_{o_1}^{p_1},l_{p_1}^{loc},...,l_{o_m}^{loc},l_{o_m}^{ap},l_{o_m}^{p_m},l_{p_m}^{loc},l_{q_1}^{loc},l_{q_1}^{ap},...,\\l_{q_{N-2m}}^{loc},l_{q_{N-2m}}^{ap}]^T$. The weight coefficients corresponding to the optimization variables in the objective function of $\mathcal{P}_4$ are represented by $\boldsymbol{c}_1$, given as $\boldsymbol{c}_1=[0,0,-w_{o_1},-w_{o_1},-w_{o_1},-w_{p_1},...,-w_{o_m},-w_{o_m},-w_{o_m},-\\w_{p_m},-w_{q_1},-w_{q_1},...,-w_{q_{N-2m}},-w_{q_{N-2m}}]^T$. The matrix $\boldsymbol{A}_{1}$ represents the coefficients in the constraints, which has a total of $2N+2$ columns and $4N+m+1$ rows. The vector $\boldsymbol{b}$ represents the constants on the right-hand side of the inequality in the constraints, and it has a total of $4N+m+1$ elements.

Introducing slack variables allows for the transformation of problem $\mathcal{P}_5$ into the following form:
\begin{align}
\label{P6}
\mathcal{P}_6: \quad \min_{\boldsymbol{x}_1,\boldsymbol{x}_2}\quad & \boldsymbol{c}_1^T\boldsymbol{x}_1+\boldsymbol{c}_2^T\boldsymbol{x}_2 \\
\text{s.t.} \quad
(13a):&\boldsymbol{A}_{1}\boldsymbol{x}_{1}+\boldsymbol{x}_{2}=\boldsymbol{b},\notag\\
(13b):&\boldsymbol{x_1},\boldsymbol{x_2}\succeq\boldsymbol{0},\notag
\end{align}

$\boldsymbol{x}_2$ is a vector of slack variables with $4N+m+1$ elements, while $\boldsymbol{c}_2=\boldsymbol{0}$ and its length is also $4N+m+1$.

Let $ \boldsymbol{c}_{3}=[\boldsymbol{c}_{1};\boldsymbol{c}_{2}],\boldsymbol{x}_{3}=[\boldsymbol{x}_{1};\boldsymbol{x}_{2}],\boldsymbol{A}_{2}=[\boldsymbol{A}_{1},\boldsymbol{I}]$, the problem $\mathcal{P}_6$ can be equated to:
\begin{align}
\label{P7}
\mathcal{P}_7: \quad \min_{\boldsymbol{\boldsymbol{x}_3}}\quad & \boldsymbol{c}_3^T\boldsymbol{x}_3 \\
\text{s.t.} \quad
(14a):&\boldsymbol{A}_2\boldsymbol{x}_3=\boldsymbol{b}\notag\\
(14b):&\boldsymbol{x}_3\succeq \boldsymbol{0}.\notag
\end{align}

To eliminate the inequality constraints in the problem $\mathcal{P}_7$, a barrier function is introduced for transformation:
\begin{align}
\label{P8}
\mathcal{P}_8: \quad \min_{\boldsymbol{\boldsymbol{x}_3}}\quad & \boldsymbol{c}_3^T\boldsymbol{x}_3-\mu\sum_{i=1}^{6N+m+3}log(x_3^i) \\
\text{s.t.} \quad
(15a):&\boldsymbol{A}_2\boldsymbol{x}_3=\boldsymbol{b},\notag
\end{align}
where $x_3^i$ represents the $i$th element in $\boldsymbol{x}_3$ and $\mu>0$. Additionally, the solution obtained by solving problem $\mathcal{P}_8$ gets closer to the optimal solution of problem $\mathcal{P}_7$ as the value of $\mu$ decreases.

It can be observed that problem $\mathcal{P}_8$ remains a convex optimization problem, allowing the utilization of the Lagrangian duality method to obtain an optimal solution when the value of $\mu$ is given. The Lagrangian function for $\mathcal{P}_8$ is given by:
\begin{align}
\label{e9}
L_{\mu}\left(\boldsymbol{x}_3,\boldsymbol{v}\right)=\boldsymbol{c}_3^T\boldsymbol{x}_3-\mu\sum_{i=1}^{6N+m+3}\log(x_3^i)\\+\boldsymbol{v}^T\left(\boldsymbol{A}_2\boldsymbol{x}_3-\boldsymbol{b}\right)\notag,
\end{align}
where vector $\boldsymbol{v}$ comprises Lagrange multipliers. By applying the Karush-Kuhn-Tucker (KKT) conditions and setting $\lambda_i=\frac\mu{x_3^i}$, $\boldsymbol{\lambda}=[\lambda_1,...\lambda_{6N +m+3}]^T$, $\boldsymbol{X}=diag(\boldsymbol{x}_3)$, $ \boldsymbol{\Lambda}=diag(\boldsymbol{\lambda})$, and $ \boldsymbol{1}=\{1,...,1\}^T$, we can obtain the following results:
\begin{align}
\label{e10}
\boldsymbol{A}_2\boldsymbol{x}_3=\boldsymbol{b},
\end{align}
\begin{align}
\label{e11}
\boldsymbol{X}\boldsymbol{\Lambda}\boldsymbol{1}=\mu\boldsymbol{1},
\end{align}
and
\begin{align}
\label{e12}
\boldsymbol{A}_2^T\boldsymbol{v}+\boldsymbol{c}_3=\boldsymbol{\lambda}.
\end{align}

Based on the above transformations, the solution to problem $\mathcal{P}_8$ can be determined by solving the following equation:
\begin{align}
\label{e13}
F_\mu(\boldsymbol{\lambda},\boldsymbol{v},\boldsymbol{x}_3)=\begin{pmatrix}\boldsymbol{A}_2\boldsymbol{x}_3-\boldsymbol{b}\\\boldsymbol{X}\boldsymbol{\Lambda}\boldsymbol{1}-\mu\boldsymbol{1}\\\boldsymbol{A}_2^T\boldsymbol{v}+\boldsymbol{c}_3-\boldsymbol{\lambda}\end{pmatrix}=\boldsymbol{0}.
\end{align}

The Newton iterative method can be employed to solve the problem. Specifically, in the $(k+1)$th iteration, the first-order Taylor expansion of function $F_\mu(\boldsymbol{\lambda},\boldsymbol{v},\boldsymbol{x}_3)$ is given by:

\begin{align}
\label{e14}
F_{\mu}(\boldsymbol{\lambda},\boldsymbol{v},\boldsymbol{x}_{3})\approx &F_{\mu}(\boldsymbol{\lambda}^{k},\boldsymbol{v}^{k},\boldsymbol{x}_{3}^{k})\notag\\+&\nabla F_{\mu}(\boldsymbol{\lambda}^{k},\boldsymbol{v}^{k},\boldsymbol{x}_{3}^{k})\left(\begin{array}{l}{\boldsymbol{\lambda}-\boldsymbol{\lambda}^{k}}\\{\boldsymbol{v}-\boldsymbol{v}^{k}}\\{\boldsymbol{x}_{3}-\boldsymbol{x}_{3}^{k}}\\\end{array}\right).
\end{align}
Here, $\boldsymbol{\lambda}^k$, ${\boldsymbol{v}}^k$, and ${\boldsymbol{x}}_3^k$ represent the solutions obtained in the $k$th iteration. Therefore, solving the equation $F_{\mu}(\boldsymbol{\lambda},\boldsymbol{v},\boldsymbol{x}_{3})=\boldsymbol{0}$ is equivalent to solving the following equation:

\begin{align}
\label{e15}
F_\mu(\boldsymbol{\lambda}^k,\boldsymbol{v}^k,\boldsymbol{x}_3^k)+\nabla F_\mu(\boldsymbol{\lambda}^k,\boldsymbol{v}^k,\boldsymbol{x}_3^k)\left(\begin{array}{c}\boldsymbol{\lambda}-\boldsymbol{\lambda}^k\\\boldsymbol{v}-\boldsymbol{v}^k\\\boldsymbol{x}_3-\boldsymbol{x}_3^k\end{array}\right)=\boldsymbol{0},
\end{align}
where
\begin{align}
\label{Jac}
\nabla F_\mu(\boldsymbol{\lambda},\boldsymbol{v},\boldsymbol{x}_3)=\begin{pmatrix}\boldsymbol{0}&\boldsymbol{0}&\mathbf{A}_2\\\boldsymbol{X}&\boldsymbol{0}&\boldsymbol{\Lambda}\\-\mathbf{I}&\mathbf{A}_2^T&\boldsymbol{0}\end{pmatrix}.
\end{align}

Hence, determining the optimal solution involves solving the following equations:
\begin{align}
\label{equ}
\begin{pmatrix}\boldsymbol{0}&\boldsymbol{0}&\boldsymbol{A}_2\\\boldsymbol{X}^k&\boldsymbol{0}&\boldsymbol{\Lambda}^k\\\boldsymbol{-I}&\boldsymbol{A}_2^T&\boldsymbol{0}\end{pmatrix}\begin{pmatrix}\Delta\boldsymbol{\lambda}\\\Delta\boldsymbol{v}\\\Delta\boldsymbol{x}_3\end{pmatrix}=\begin{pmatrix}-\boldsymbol{\gamma}_1^k\\-\boldsymbol{\gamma}_2^k\\-\boldsymbol{\gamma}_3^K\end{pmatrix},
\end{align}
where
\begin{align}
\left(\begin{array}{c}\boldsymbol{\gamma}_{1}^{k}\\\boldsymbol{\gamma}_{2}^{k}\\\boldsymbol{\gamma}_{3}^{k}\end{array}\right)=F_{\mu}(\boldsymbol{\lambda}^{k},\boldsymbol{v}^{k},\boldsymbol{x}_{3}^{k}),
\label{eq25}
\end{align}
and
\begin{align}
\label{}
\begin{pmatrix}\Delta\boldsymbol{\lambda}\\\Delta\boldsymbol{v}\\\Delta\boldsymbol{x}_3\end{pmatrix} = \left(\begin{array}{c}\boldsymbol{\lambda}-\boldsymbol{\lambda}^k\\\boldsymbol{v}-\boldsymbol{v}^k\\\boldsymbol{x}_3-\boldsymbol{x}_3^k\end{array}\right).
\end{align}
The solution obtained by the $k+1$th iteration process is:
\begin{align}
\label{update}
\left(\begin{array}{c}\boldsymbol{\lambda}^{k+1}\\\boldsymbol{v}^{k+1}\\\boldsymbol{x}^{k+1}\end{array}\right)=\left(\begin{array}{c}\boldsymbol{\lambda}^{k}\\\boldsymbol{v}^{k}\\\boldsymbol{x}^{k}\end{array}\right)+{\beta}\left(\begin{array}{c}\Delta\boldsymbol{\lambda}\\\Delta\boldsymbol{v}\\\Delta\boldsymbol{x}_{3}\end{array}\right),
\end{align}
where $ \beta$ is the iteration step. Thus, we can derive an approximate optimal solution to the problem $\mathcal{P}_8$ for a given value of $\mu$. Next, we can decrease the value of $\mu$ and utilize the solution obtained with the previous $\mu$ value as the initial solution for the problem $\mathcal{P}_8$ with the updated $\mu$ value. Afterwards, we can apply the same approach to solve the updated problem $\mathcal{P}_8$. The iterative process continues until the accuracy requirement is satisfied or the maximum number of iterations is reached. This approach requires fewer iterations than solving problem $\mathcal{P}_8$ with the minimum $\mu$ value, accelerating the solution process for problem $\mathcal{P}_5$. In order to further accelerate the solution of problem $\mathcal{P}_5$, we can simultaneously decrease the value of $\mu$ while iterating to solve problem $\mathcal{P}_8$. This approach is outlined in Algorithm 1.

\begin{figure}[t]
\centerline{\includegraphics[width=0.5\textwidth]{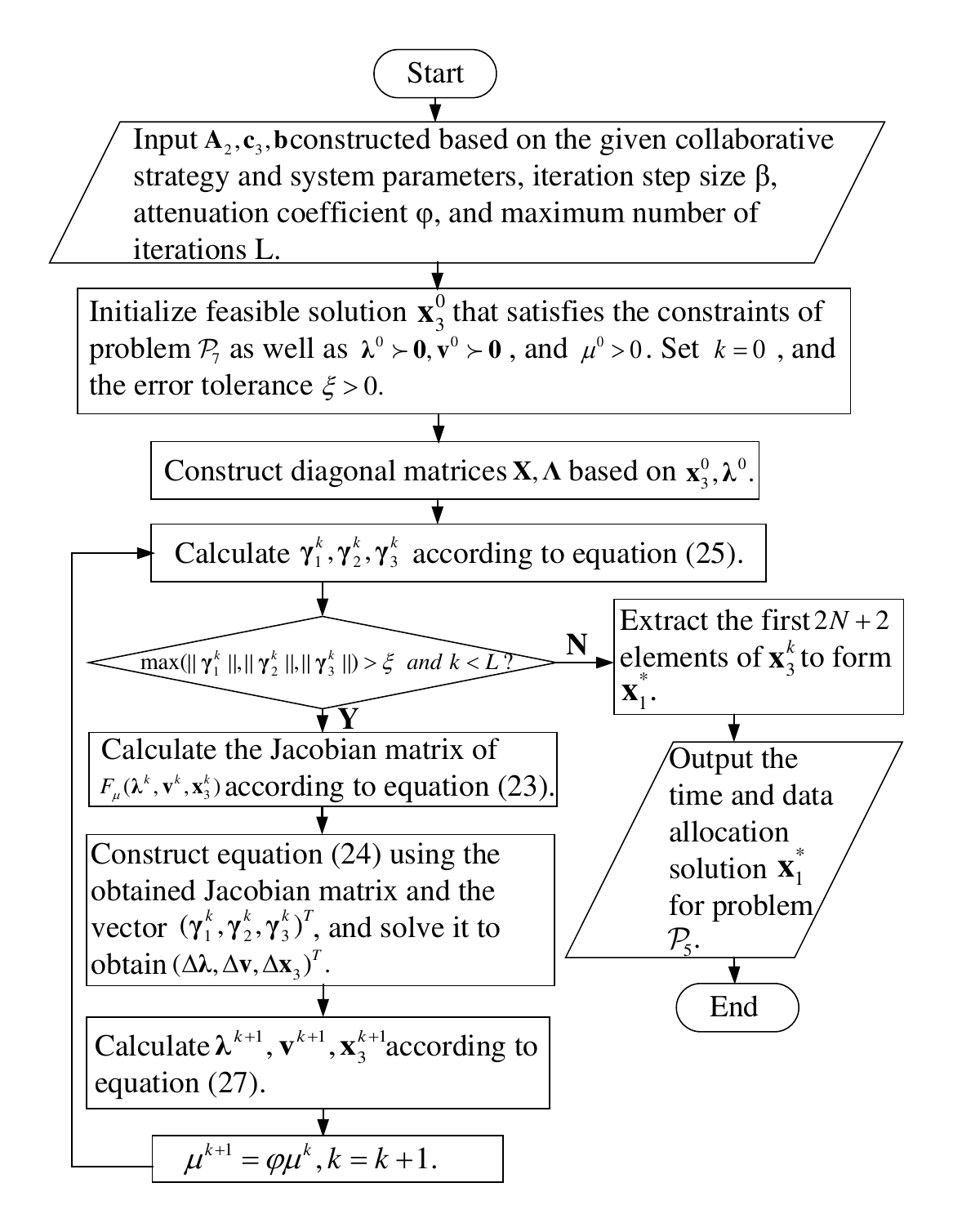}}
\caption{Algorithm 1: Interior-Point Method Based Algorithm for Solving Problem $\mathcal{P}_5$.}

\label{algo1}
\end{figure}

\textit{Remark 1 (Details on the Iteration Process of Algorithm 1): In our designed algorithm, the key factors influencing the iterative process include the convergence judgment criterion, the iteration step size $\beta$, and the attenuation coefficient $\varphi$. The convergence criterion determines if the algorithm has achieved a solution satisfying the accuracy requirement. In this paper, we define it as the largest component of the absolute error vector resulting from substituting the current solution into equation (\ref{eq25}) and comparing it to the target result $\boldsymbol{0}$. To limit excessive iterations, we establish a maximum iteration count $L$. The iteration step size $\beta$ regulates the variable updates' magnitude in each iteration. Typically, the step size choice involves a line search technique, such as the Goldstein Condition \cite{41}. Moreover, the attenuation coefficient $\varphi$ controls the decay of the barrier function in each iteration and is typically chosen within the range of (0, 1). Empirically, common choices for the attenuation coefficient include values such as 0.1, 0.2, or smaller. Experimenting with various values and observing their effects on the algorithm's performance is advisable.} 

\textit{Remark 2 (Computational Complexity Analysis of Algorithm 1): In each iteration, the computational complexity of computing the Jacobian matrix for $F_\mu(\boldsymbol{\lambda}^k,\boldsymbol{v}^k,\boldsymbol{x}_3^k)$ is $\mathcal{O}(N^{3})$. Moreover, solving equation (\ref{equ}) requires $\mathcal{O}(N^{3})$, while calculating the updated $\boldsymbol{\gamma}_1^k,\boldsymbol{\gamma}_2^k,\boldsymbol{\gamma}_3^k$ and its paradigm needs $\mathcal{O}(N^{2})$. Therefore, the computational complexity of Algorithm 1 is approximately $\mathcal{O}(LN^{3})$.} 

The solution for problem $\mathcal{P}_5$ is identical to that for problem $\mathcal{P}_4$. Therefore, given the collaborative strategy, we can derive the optimal solution for time and data allocation. Subsequently, our objective is to identify the most favorable collaborative strategy. To achieve this, we can employ an exhaustive search method to enumerate all feasible solutions $\{\boldsymbol{\psi}_k|k\in\mathcal{K}\}$ for collaborative strategies satisfying constraints $(7a)$ and $(7b)$. Next, we can solve problem $\mathcal{P}_4$ for each of these collaborative strategies. The optimal solution $\boldsymbol{\psi}^*$ can be determined by employing the following equation:

\begin{align}
\label{e19} \boldsymbol{\psi}^*=\arg\max_{\boldsymbol{\psi_k}}f(\boldsymbol{\psi_k}),
\end{align}
where $f(\boldsymbol{\psi_k})=\mathop {\emph{max} }\limits_{{\boldsymbol{\alpha, l}}}\sum_{i=1}^{m_k}\{w_{o_{k,i}}(l_{o_{k,i}}^{loc}+l_{o_{k,i}}^{ap}+l_{o_{k,i}}^{p_{k,i}})+w_{p_{k,i}}l_{p_{k,i}}^{loc}\}+\sum_{j=1}^{N-2m_k}w_{q_{k,j}}(l_{q_{k,j}}^{loc}+l_{q_{k,j}}^{ap})$.

\textit{Remark 3 (Computational Complexity Analysis of the exhaustive search scheme):
Initially, we need to determine the number of collaborative clusters  $m$, with a total of $\bigg\lfloor{\frac{N}{2}}\bigg\rfloor + 1$ possible scenarios. Furthermore, for a given number of collaborative clusters $m>0$, there are $C_N^{2m}(2m-1)!!$ possible scenarios for constructing collaborative clusters. In this case, $C_N^{2m}$ represents the number of scenarios in which $2m$ devices are selected from a total of $N$ IoT devices to form collaborative clusters, and $(2m-1)!!$ denotes the number of scenarios in which collaborative clusters are constructed using $2m$ IoT devices. In summary, there are a total of $1 + \sum_{m=1}^{m=\left\lfloor\frac N2\right\rfloor}C_N^{2m}(2m-1)!!$ possible scenarios for constructing collaborative clusters. For each instance of collaborative cluster construction, the maximum achievable objective value needs to be determined based on Algorithm 1. Thus, the overall computational complexity is $ \mathcal{O}(LN^3(1+\sum_{m=1}^{m=\left\lfloor\frac{N}{2}\right\rfloor}C_N^{2m}(2m-1)!!))$.}

It is evident that the exhaustive search method becomes computationally prohibitive when $N$ is large. Therefore, we must explore a low-complexity algorithm to identify the most favorable collaborative strategy.

\subsection{Collaboration Strategy}\label{IV-B}
This subsection presents a low-complexity, priority-based algorithm designed to find a suboptimal solution for collaboration strategy, offering a trade-off between computational complexity and system performance.

In this study, the diversity among IoT devices is considered in three main aspects: (1) the importance weight $w_n$ assigned to the $n$th device's data, (2) the channel quality $h_n$ between the $n$th device and the HAP, and (3) the local computational capacity $f_n$ of the $n$th device. Based on these three variables, a priority function can be constructed to guide the formation of collaborative clusters.
\begin{align}
\label{Pro}
O_n=w_n\frac{k_nf_n^2\phi_n}{\frac{p_n}{R_n^{ap}}}=\frac{w_nBlog_{2}(1+\frac{p_{n}h_{n}}{N_{0}})k_nf_n^2\phi_n}{p_n}.
\end{align}

The term $k_nf_n^2\phi_n$ represents the energy consumption per unit of data volume processed by the $n$th device, and $\frac{p_n}{R_n^{ap}}$ represents the energy consumption per unit of data volume offloaded by the $n$th device. Therefore, a higher value of $O_n$ indicates a higher priority for the $n$th device to serve as SD. During the construction of a new collaborative cluster, the device with the highest priority among the current IDs is selected as SD, and the device with the lowest priority among the current IDs is selected as AD. Therefore, an iterative algorithm can be employed to determine the optimal number $m^*$ of collaborative clusters. Initially, setting $m$ to 0 indicates that all devices are IDs, and then Algorithm 1 is used to calculate the achievable objective value based on the current collaboration strategy. In each iteration, $m$ is incremented by 1, and a new collaborative cluster is constructed by selecting the device with the highest priority among the current IDs as SD and the device with the lowest priority as AD. Algorithm 1 is used again to calculate the achievable objective value based on the current collaborative strategy. After iterating $\left\lfloor\frac{N}{2}\right\rfloor+1$ times, we can obtain the achievable objective value for each value of $m$. Hence, the final output would be the collaboration strategy corresponding to the maximum achievable objective value. The summarized solution is outlined in Algorithm 2.

\begin{figure}[t]
\centerline{\includegraphics[width=0.5\textwidth]{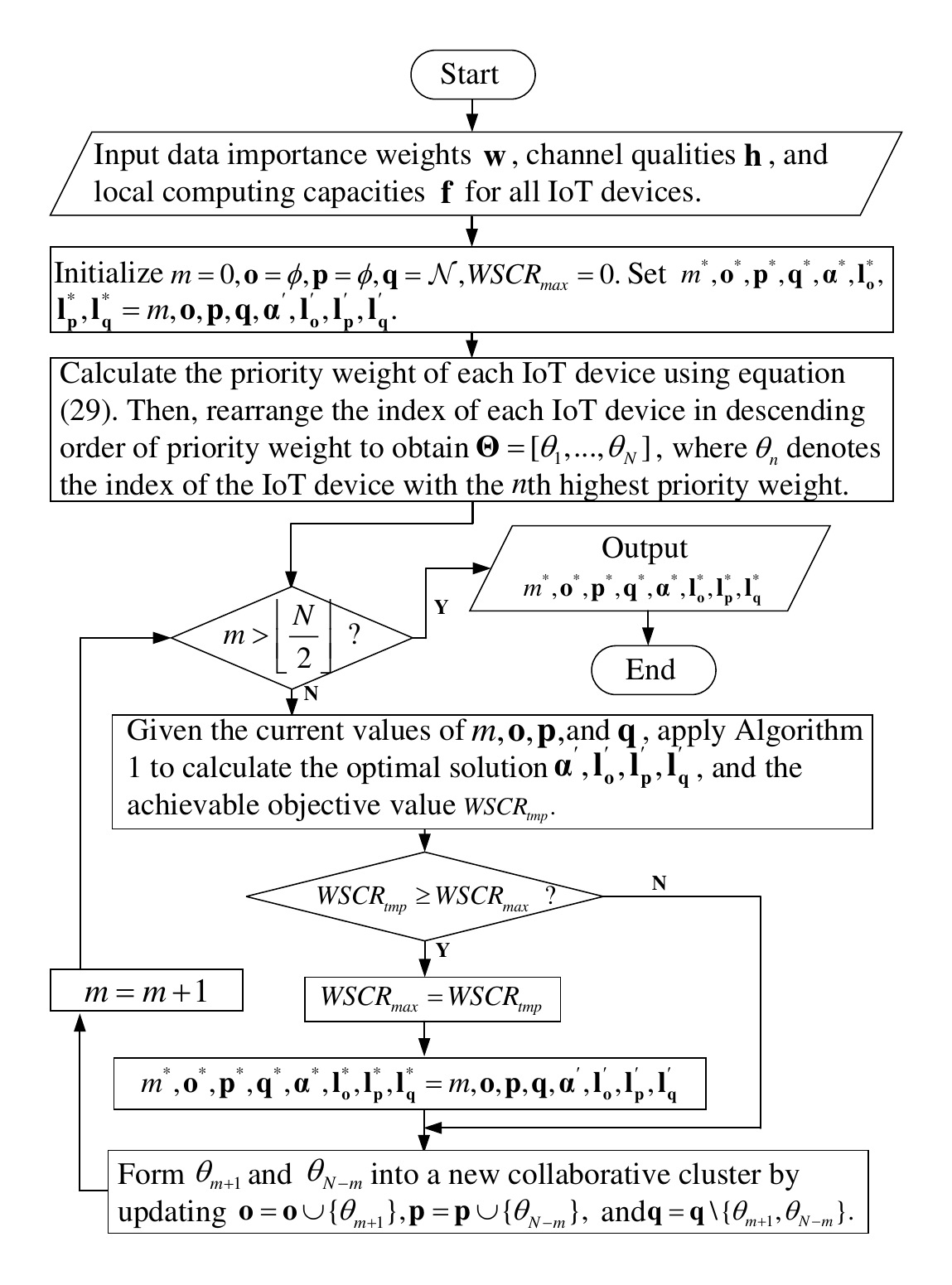}}
\caption{Algorithm 2: Priority-Based Algorithm for Solving Problem $\mathcal{P}_2$.}
\label{algo2}
\end{figure}

\textit{Remark 4 (Computational Complexity Analysis of the priority-based iterative algorithm): The proposed algorithm establishes a priority function to narrow down the search space to $\left\lfloor\frac{N}{2}\right\rfloor$. Additionally, in each iteration, Algorithm 1 is employed, incurring a computational cost of $\mathcal{O}(LN^{3})$. As a result, the overall computational complexity amounts to $\mathcal{O}(LN^4)$.}

Nevertheless, the iterative search process for determining the optimal number of collaborative clusters continues to result in extended execution latency, mainly attributable to the computational complexity of Algorithm 1. Section V compares the algorithm's execution time, illustrating that the iterative search process for the optimal number of collaborative clusters substantially impacts the total execution latency. Thus, we aim to ascertain the optimal number of collaborative clusters with diminished computational complexity. To accomplish this, we put forth a scheme based on deep learning (DL) that encompasses the following primary steps:

\begin{enumerate}
\item{\emph{Step 1}: Over a specified period, collect samples of the time-varying system states, represented as $\{\boldsymbol{w}_t,\boldsymbol{h}_t\mid t\in\mathcal{I}\}$. In this context, $\boldsymbol{w}_t=[w_1^t,....w_N^t]^T$, $\boldsymbol{h}_t = [h_1^t,...,h_N^t]^T$, and $\mathcal{I}$ denotes a set of time indices, and $\mid\mathcal{I}\mid$ indicates the total number of gathered samples.}
\item{\emph{Step 2}: Apply Algorithm 2 to ascertain the optimal number of collaborative clusters for each system state sample. These determined numbers of clusters serve as labels for the system state samples, forming a training dataset denoted as $\{\boldsymbol{w}_t,\boldsymbol{h}_t,m_t^*\mid t\in\mathcal{I}\}$.}
\item{\emph{Step 3}: Build a Deep Neural Network (DNN) with an input layer dimension of $2N$, which aligns with the size of $\{\boldsymbol{w}_{t},\boldsymbol{h}_{t}\}$, and an output layer dimension of $1$, corresponding to $m_{t}^{*}$.}
\item{\emph{Step 4}: Utilize the stochastic gradient descent method to train the DNN with the training dataset $\{\boldsymbol{w}_t,\boldsymbol{h}_t,m_t^*\mid t\in\mathcal{I}\}$. The training process employs the cross-entropy loss function. Before commencing training, it is imperative to standardize and scale the data to adhere to a standard normal distribution with a mean of $0$ and a variance of $1$. This preprocessing step facilitates a swifter training process.}
\item{\emph{Step 5}: Save the trained model and scaling parameters used to normalize the training data. Next, deploy the model to the HAP and import the corresponding parameters.}
\item{\emph{Step 6}: In the decision-making process, the HAP inputs the acquired system state indicators, denoted as $\{\boldsymbol{w}_{new},\boldsymbol{h}_{new}\}$, into the DNN model following an identical normalization procedure. Next, employ Algorithm 3 to ascertain the suitable decision. The particulars of Algorithm 3 are expounded upon below.}
\end{enumerate}

\begin{figure}[t]
\centerline{\includegraphics[width=0.5\textwidth]{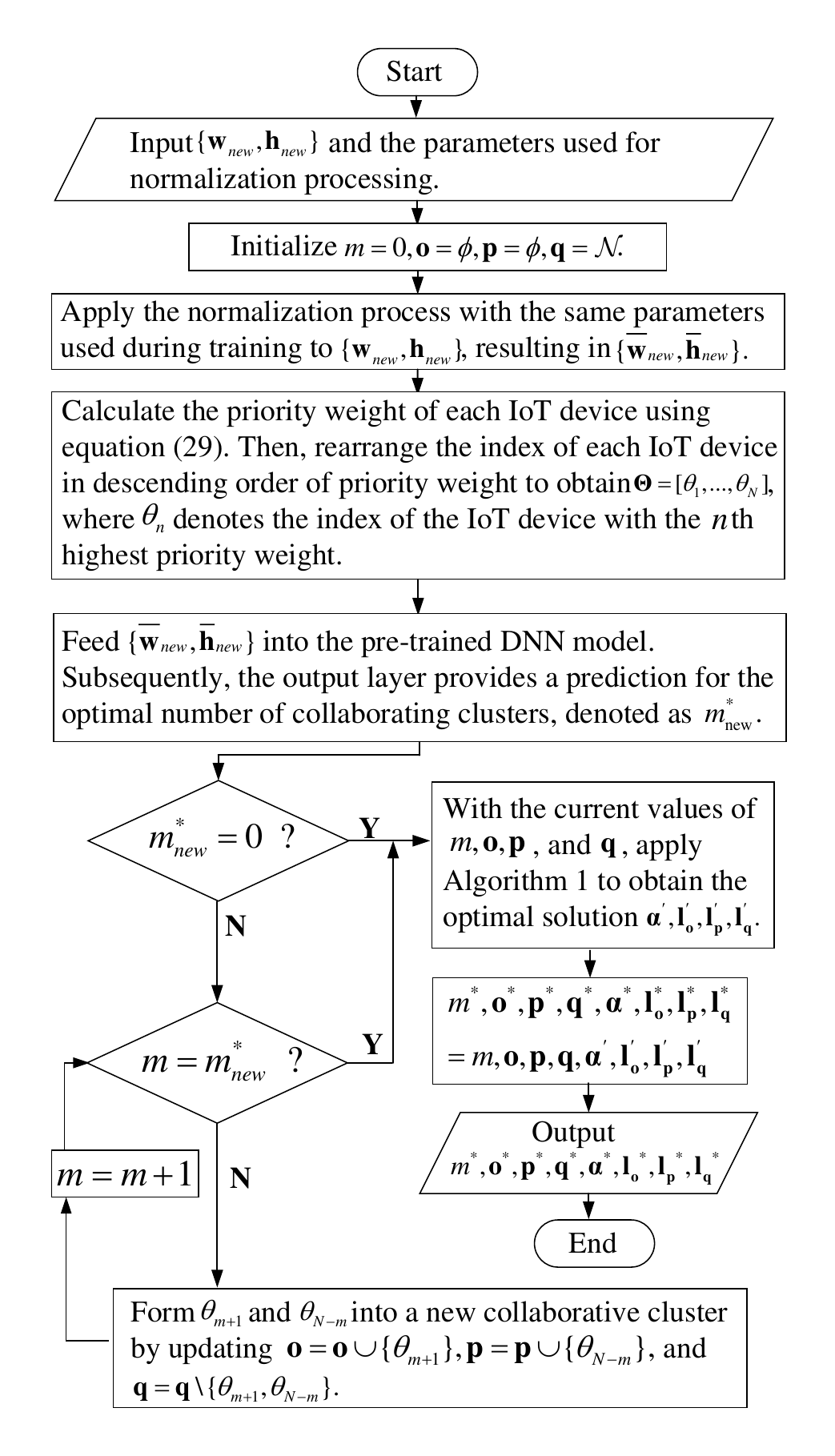}}
\caption{Algorithm 3: DL-Based Algorithm for Solving Problem $\mathcal{P}_2$.}
\label{algo3}
\vspace{-0.5cm}
\end{figure}

By leveraging the trained DNN, the iterative process of finding the optimal number of collaborative clusters is skipped. As a result, the deep learning-based scheme achieves a computational complexity of $\mathcal{O}(LN^{3})$.
\section{Numerical Results}
In this section, some simulation results are provided to verify our theoretical results and evaluate the performance of our proposed algorithms. The network scenario shown in Fig. 1 is simulated. In all simulations, we use the parameters of Powercast TX91501-3W with $P=3$ Watts for the energy transmitter at the HAP, and those of P2110 Powerharvester for the energy receiver at each IoT device\cite{25}. The energy harvesting efficiency is set as $\eta=0.51$, and the average channel gain, denoted as $\overline{h}_n$, adheres to the free-space path loss model outlined below:
\begin{align}
\label{P20}
\overline{h}_n=A_d(\frac{3\times10^8}{4\pi f_cd_n})^{d_e},n\in\mathcal{N},
\end{align}
with $A_d=4.11$ denoting the antenna gain, $f_c=915$ MHz denoting the transmission frequency, and $d_e=2.8$ denoting the path loss exponent. The time-varying wireless channel gain of the $n$th IoT device at time frame $t$ is generated using a Rayleigh fading channel model, expressed as $h_n^t = \overline{h}_n \alpha_n^t$. Here, $\alpha_n^t $ represents the independent random channel fading factor, which follows an exponential distribution with a unit mean.

Without loss of generality, the time block length is set as $T=1$ s, the number of IoT devices is set as $N=20$, and the minimum data processing requirement within a time frame of each IoT device is set as $l_{th}=4000$ bit. For local computing, the computation energy efficiency coefficient and the number of CPU cycles needed to process each bit of raw data are set as $k_{n}=10^{-26}$ and  $\phi_n=100, n\in \mathcal{N}$, respectively \cite{17}. For computing offloading, the bandwidth is set as $B=2$ MHz and the noise power is set as $N_0=10^{-10}$ W. All above parameters will not change unless otherwise stated. Moreover, all the simulations are performed on a Pytorch 1.10.2 platform with an Intel Core i7-13650HX 4.9 GHz CPU and 16 GB of memory.

First, we assess the performance of our two proposed algorithms in comparison to the exhaustive search method. In this method, all  $1+\sum_{m=1}^{m=\left\lfloor\frac N2\right\rfloor}C_N^{2m}(2m-1)!!$ potential scenarios for constructing collaborative clusters are explored to achieve the highest WSCR.

\begin{figure}[t]
\centerline{\includegraphics[width=0.5\textwidth]{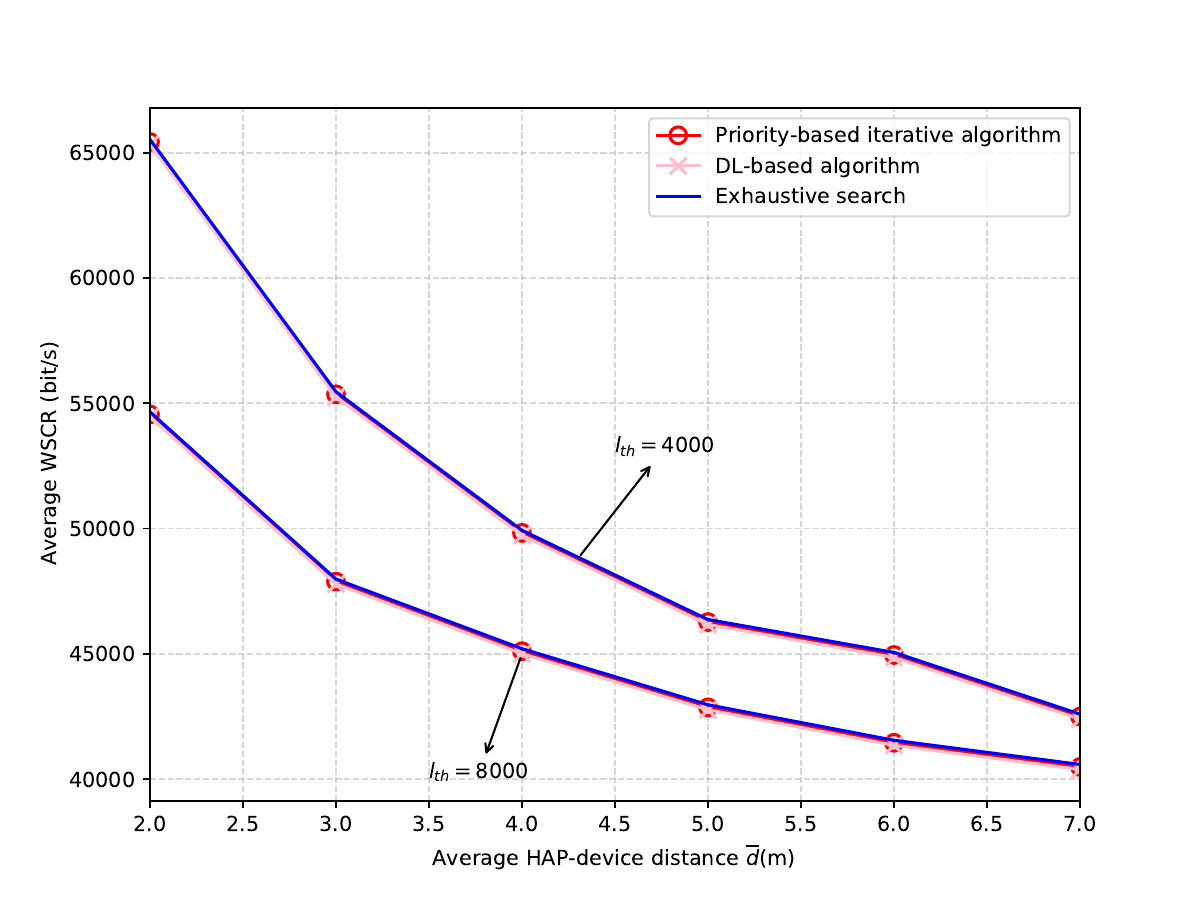}}
\caption{The average WSCR of the system versus the average distance from the IoT devices to the HAP under different minimum data processing requirements.}
\label{WSCR_d}
\end{figure}

Fig. \ref{WSCR_d} shows the achievable average WSCR of the system versus the average distance $\overline{d}$ from the IoT devices to the HAP under different minimum data processing requirements within a time frame, i.e., $l_{th} = 4000$ (the upper curve) and $l_{th} = 8000$ (the lower curve). It is evident that as $\overline{d}$ increases, the achievable average WSCR decreases. This is attributed to the fact that greater distances result in poorer channel conditions. Furthermore, an increase in $l_{th}$ leads to a reduction in the achievable average WSCR for all three algorithms. Notably, the performance of our proposed DL-based algorithm closely aligns with that of the priority-based iterative algorithm and the exhaustive search approach. This suggests that our proposed algorithms have the capacity to converge towards the globally optimal solution.

\begin{figure}[]
\centerline{\includegraphics[width=0.5\textwidth]{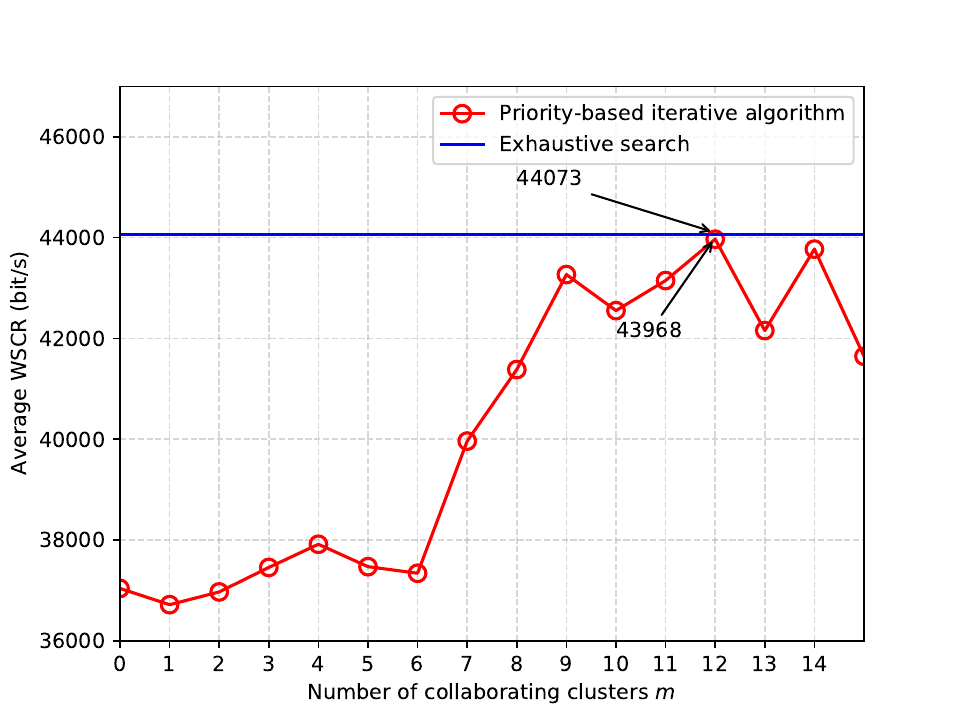}}
\caption{The priority-based iterative algorithm curve versus iteration number.}
\label{WSCR_it}
\end{figure}

In Fig. \ref{WSCR_it}, we analyze the performance of the proposed priority-based iterative algorithm by plotting the evolution of the achievable average WSCR of the system at each iteration. Here, we set $l_{th} = 4000$ bit, $N = 30$, $\overline{d} = 4$ m, while keeping the remaining parameters consistent with Fig. \ref{WSCR_d}. It is notable that the iterative search curve does not exhibit a monotonic or concave pattern characterized by an initial increase followed by a decrease. Additionally, the curve reaches its closest value to the exhaustive search result when the number of collaborating clusters is 12. This behavior stems from the influence of the number of collaborative clusters on the optimal time and data allocation scheme, resulting in irregular fluctuations in the average WSCR with respect to the number of collaborative clusters. As a consequence, the proposed priority-based iterative method entails an exploration of all conceivable scenarios for the number of collaborating clusters, leading to a substantial increase in execution time.

\begin{figure}[]
\centerline{\includegraphics[width=0.5\textwidth]{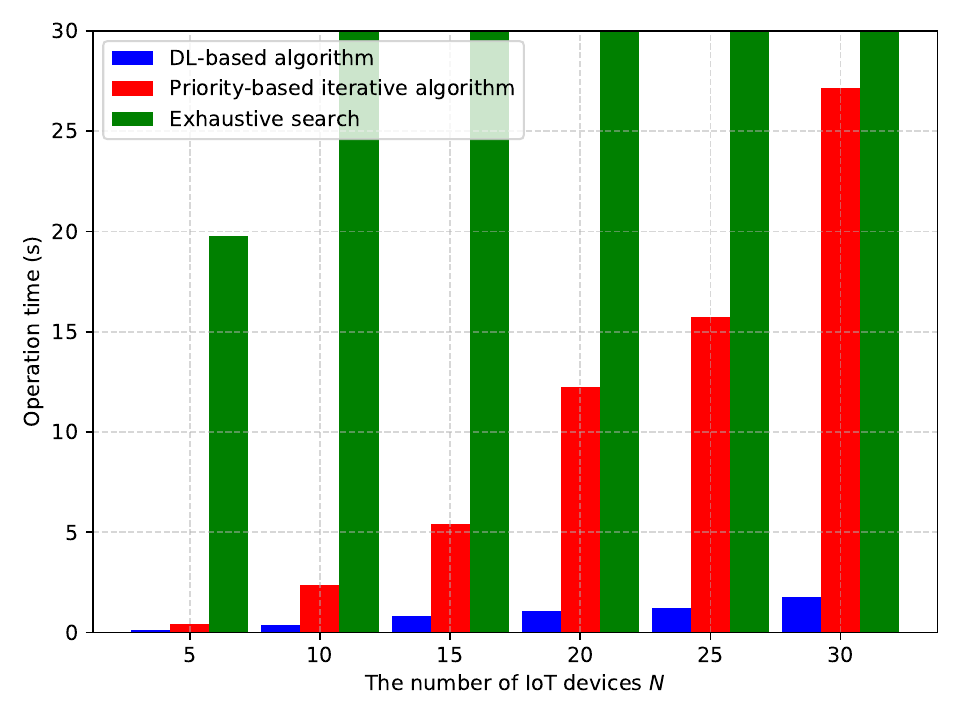}}
\caption{The average operation time versus the number of IoT devices.}
\label{WSCR_t}
\end{figure}

In Fig. \ref{WSCR_t}, the average operation time of the three algorithms is plotted against the number of IoT devices $N$, with the distance from the devices to the HAP being randomly generated. It is noteworthy that for relatively small numbers of IoT devices, e.g., $N=5$, the average operation time required for the priority-based iterative algorithm closely matches that of the DL-based algorithm. However, as $N$ increases, the operation time required for the priority-based iterative algorithm experiences a significant rise. This may be attributed to the fact that an increased number of IoT devices introduces more iterations in the process of finding the optimal number of collaborating clusters. In contrast, the required operation time for the DL-based algorithm does not exhibit a substantial increase with the growth of $N$. Furthermore, the average operation time of the two proposed algorithms is much lower than that of the exhaustive search algorithm. Based on the observations in Fig. \ref{WSCR_t}, it is more advisable to employ the priority-based iterative algorithm in networks with relatively small sizes, while the DL-based algorithm is preferred for large-scale networks. This conclusion aligns with our theoretical analysis of the computational complexities of the two algorithms.

Subsequently, we contrast the proposed two algorithms with the following three additional benchmark schemes. Additionally, we include the result of the exhaustive search method (EX) for comparison.

\begin{enumerate}
\item{\textbf{Local Computing only (LC):} In this scheme, all IoT devices’ data is processed locally.}
\item{\textbf{Non-Collaborative Partial Offloading (NC):} This scheme does not involve collaboration among IoT devices. The entire system follows the time and data allocation strategy derived from Algorithm 1.}
\item{\textbf{Stochastic collaboration (SC):} In this scheme, the collaboration strategy of IoT devices is randomly selected.}
\end{enumerate}

\begin{figure}[]
\centerline{\includegraphics[width=0.5\textwidth]{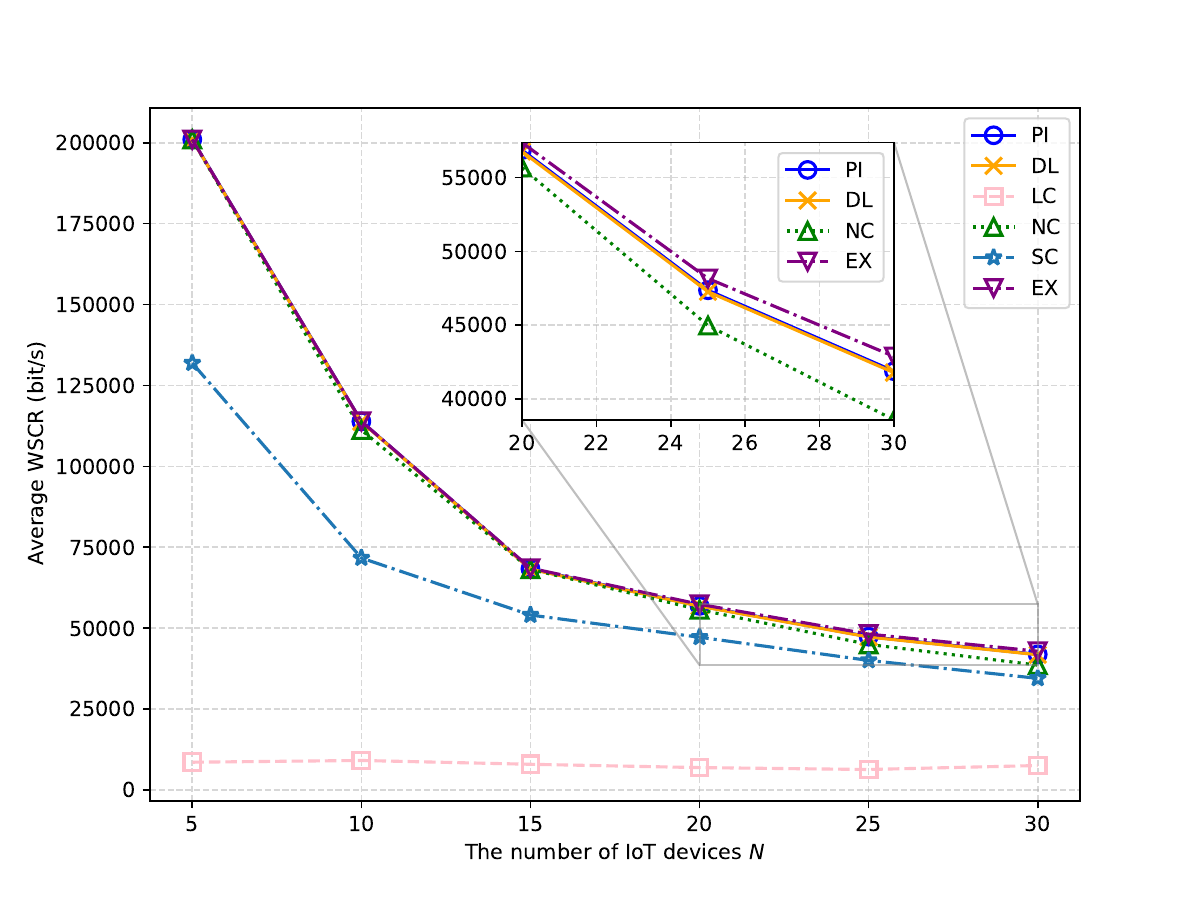}}
\caption{The average WSCR of the system versus the number of IoT devices.}
\label{WSCR_N}
\end{figure}

Fig. \ref{WSCR_N} illustrates the achievable average WSCR of the system versus the number of IoT devices $N$. It is observed that as $N$ increases, the average WSCR decreases for all schemes except the LC strategy. This can be attributed to the gradual decrease in the average computational resources allocated by the HAP to the IoT devices with the increasing number of IoT devices. Additionally, the performance of the SC scheme consistently lags behind that of the NC scheme, underscoring the importance of a meticulously designed collaboration strategy among users. As the number of IoT devices $N$ increases, the gap between the proposed schemes (priority-based iterative algorithm (PI) and DL-based algorithm (DL)) and the NC scheme widens. This underscores the imperative of optimizing collaboration among users in scenarios with a large number of IoT devices, while the benefits of such collaboration are less pronounced in scenarios with a small number of IoT devices. Finally, both of our proposed schemes demonstrate performance that is highly comparable to that of the exhaustive search algorithm, indicating their capability to achieve nearly optimal performance.

\begin{figure}[]
\centerline{\includegraphics[width=0.5\textwidth]{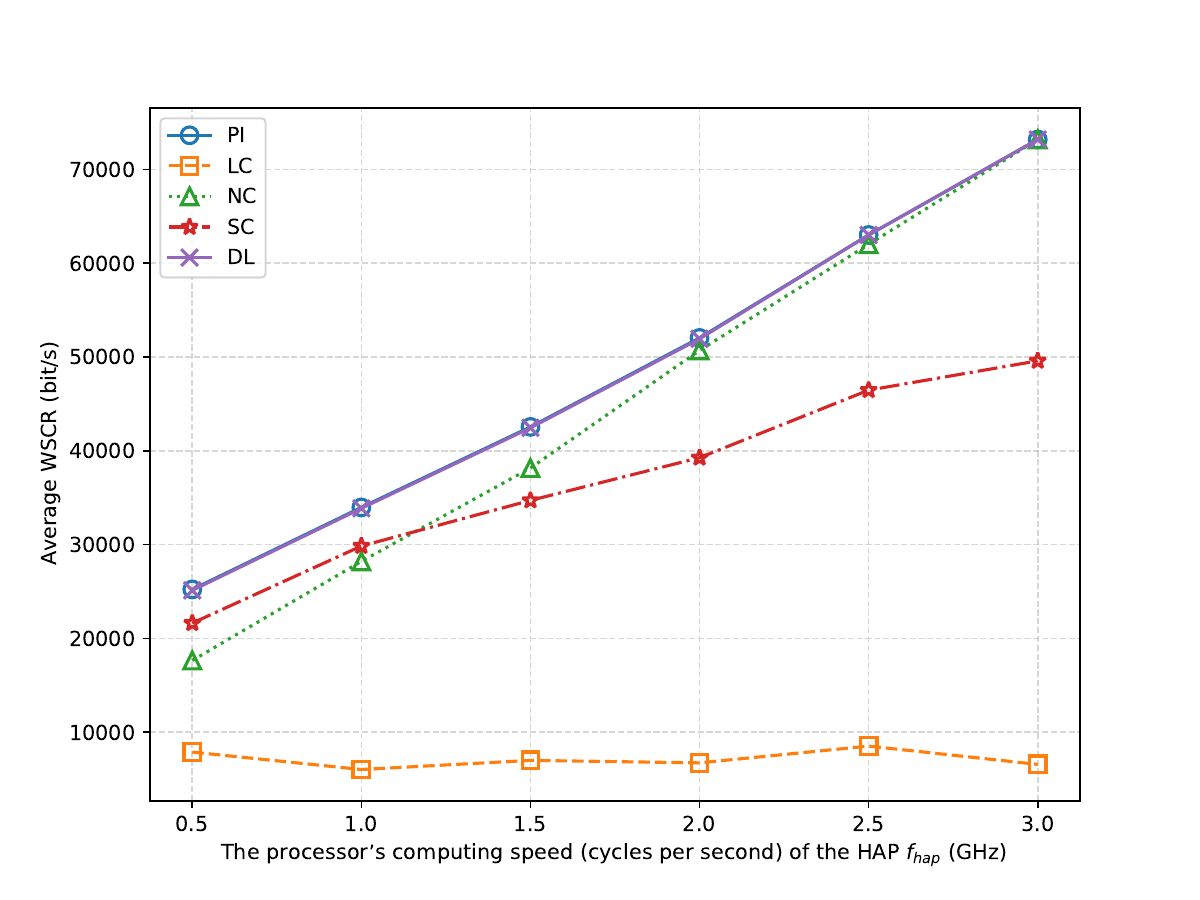}}
\caption{The average WSCR of the system versus the processor’s computing speed of the HAP.}
\label{WSCR_f}
\end{figure}

Fig. \ref{WSCR_f} illustrates the average WSCR of the system versus the processor’s computing speed of the HAP $f_{hap}$. It is evident that as $f_{hap}$ increases, the average WSCR increases for all schemes except the LC strategy. When $f_{hap} > 1$ GHz, the NC scheme surpasses the SC scheme in performance. Additionally, with the increase of $f_{hap}$, the performance of the NC scheme gradually converges towards that of the proposed schemes. This emphasizes that collaboration among devices becomes less crucial when the computational resources of the HAP are ample. Conversely, when $f_{ap} < 1$GHz, the SC scheme outperforms the NC scheme.

\begin{figure}[]
\centerline{\includegraphics[width=0.5\textwidth]{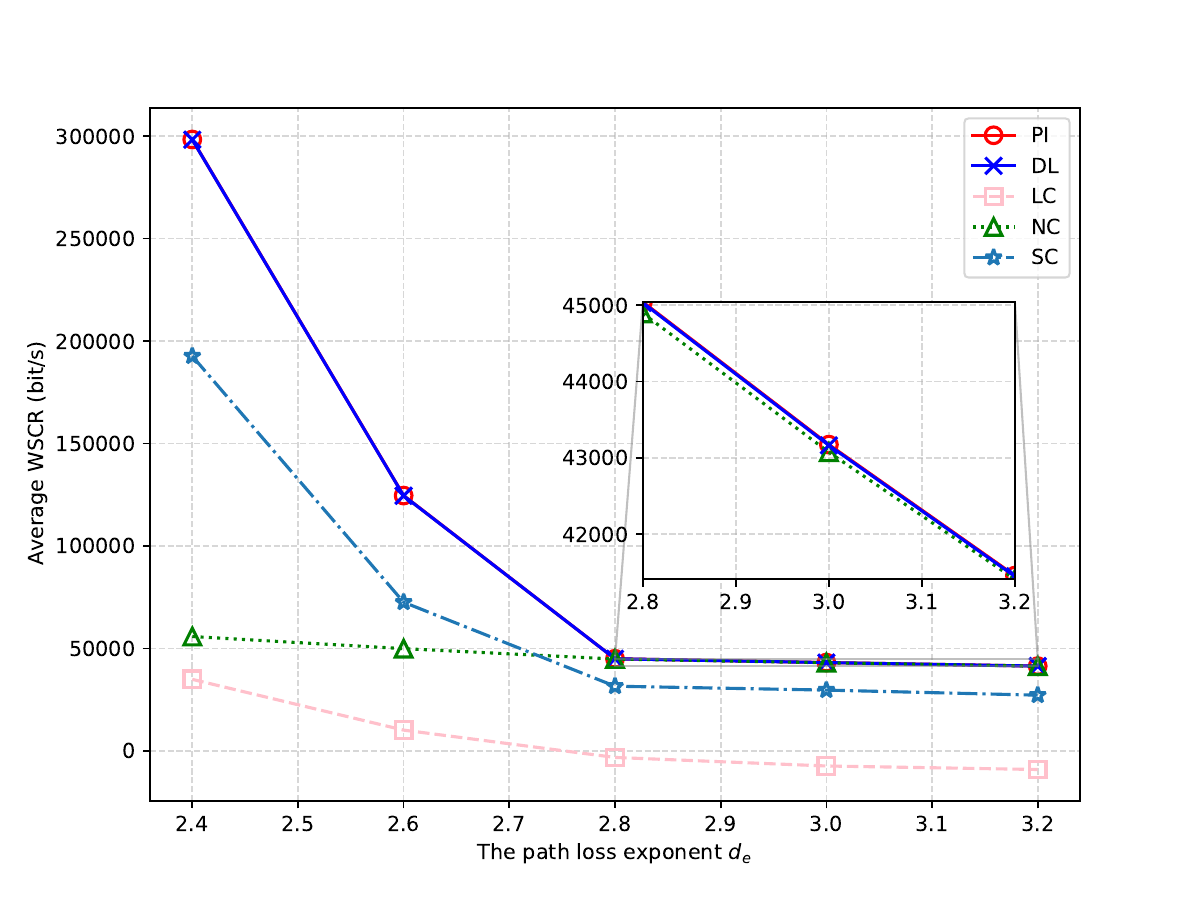}}
\caption{The average WSCR of the system versus the path loss exponent $d_e$.}
\label{WSCR_de}
\end{figure}

Fig. \ref{WSCR_de} displays the average WSCR of the system with respect to the path loss exponent $d_e$. It is evident that with an increase in $d_e$, the average WSCR of all schemes experiences a significant decrease. When $d_e<2.7$, the SC scheme outperforms the NC scheme. However, for $d_e \ge 2.8$, the NC scheme surpasses the SC scheme and gradually approaches the performance level of the proposed schemes. This may be attributed to the fact that, as the channel quality deteriorates, the energy harvested by the devices decreases, while the energy required for offloading increases. Consequently, the performance enhancement resulting from user collaboration gradually diminishes. Therefore, collaboration among users becomes unnecessary when the channel quality is poor.

Lastly, we examine the influence of various parameters on device collaboration. In particular, we plot the variation in the optimal number of collaborating clusters obtained through the priority-based iterative algorithm under different parameters.

\begin{figure}[]
\centerline{\includegraphics[width=0.5\textwidth]{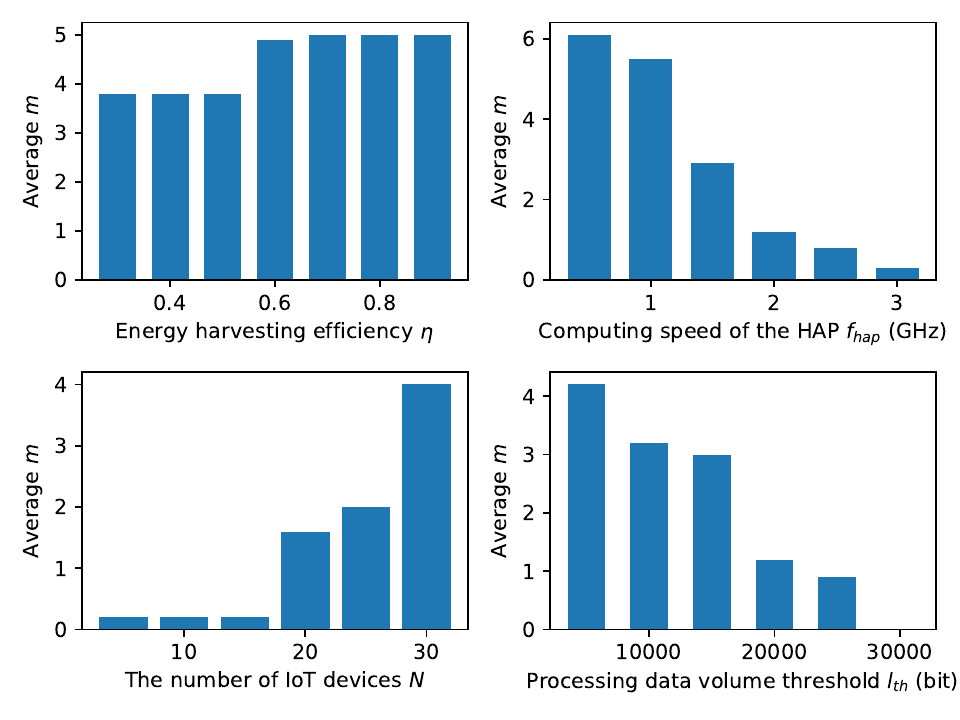}}
\caption{The optimal number of collaborating clusters $m$ obtained by the priority-based iterative algorithm versus various parameters.}
\label{WSCR_num}
\end{figure}

Fig. \ref{WSCR_num} presents the optimal number of collaborating clusters $m$ obtained through the priority-based iterative algorithm versus various parameters. It is evident that with an increase in the energy harvesting efficiency $\eta$ or the number of IoT devices $N$, the optimal number of collaborative clusters also increases. Conversely, as the processor’s computing speed of the HAP $f_{hap}$ or the minimum data processing requirement within a time frame for each device increases, the optimal number of collaborative clusters decreases. Consequently, in scenarios where energy is the primary limiting factor, collaboration among devices is restrained, while in scenarios where the computational capacity of the HAP is the main limiting factor, collaboration among users is encouraged. Furthermore, an increase in the minimum data processing requirement within a time frame for each device hinders collaboration among users.

Based on the simulation results above, it is evident that the DL-based algorithm achieves near-optimal performance within a short runtime. Hence, the proposed scheme exhibits significant feasibility and advantages for real-time decision-making in real-world fading environments. Nevertheless, the scheme possesses potential limitations, such as inadequate adaptation to IoT devices joining or exiting the network. To mitigate this issue, in practical deployment, DNN models corresponding to different numbers of IoT devices can be pre-trained and switched based on the varying number of IoT devices. However, this approach necessitates the upfront collection of extensive data for different IoT device counts, incurring high costs. In contrast, an alternative scheme employs the priority-based iterative algorithm for initial decision-making, saving samples and labels for subsequent use. Upon accumulating sufficient training data for a specific number of IoT devices, the corresponding DNN model is trained. Subsequently, when encountering the specified number of IoT devices, the trained DNN is utilized. Over time, DNN adoption will become the norm.

\section{Conclusion}
This paper investigates the design for maximizing the weighted sum computation rate in a wireless-powered MEC network with multi-user cooperation. A mathematical optimization problem is formulated by jointly optimizing collaboration, time allocation, and data distribution among multiple IoT devices and the HAP while adhering to constraints on energy causality and devices' computing rate requirements. To efficiently address this mixed-integer programming problem, an interior-point method based algorithm and a priority-based iterative algorithm are initially proposed. Then, a deep learning-based approach is introduced to accelerate the algorithm's operation further. Simulation results demonstrate that the proposed algorithms achieve comparable performance with the exhaustive search method. Furthermore, the deep learning-based approach significantly reduces the algorithm's execution time compared to the priority-based iterative algorithm and the exhaustive search method.

\bibliographystyle{IEEEtran}
\bibliography{main}

\end{document}